\newcommand{\citet}[1]{\cite{#1}}
\newcommandx{\unsure}[2][1=]{\todo[linecolor=green,backgroundcolor=green!25,bordercolor=green,#1]{\normalsize #2}}
\newcommandx{\improvement}[2][1=]{\todo[inline,linecolor=blue,backgroundcolor=blue!05,bordercolor=blue,#1]{\normalsize #2}}
\newcommandx{\info}[2][1=]{\todo[linecolor=yellow,backgroundcolor=yellow!25,bordercolor=yellow,#1]{#2}}
\newcommandx{\floatmodel}[2][1=]{\todo[inline,linecolor=red,backgroundcolor=yellow!25,bordercolor=yellow,#1]{#2}}
\newcommandx{\thiswillnotshow}[2][1=]{\todo[disable,#1]{#2}}
\newcommandx{\karol}[2][1=]{\todo[inline,linecolor=blue,backgroundcolor=blue!25,bordercolor=blue,caption={\normalsize \textbf{Karol}},#1]{\normalsize #2}}
\newcommandx{\jana}[2][1=]{\todo[inline,linecolor=red,backgroundcolor=red!25,bordercolor=red,caption={\normalsize
\textbf{jana}},#1]{\normalsize #2}}
\newcommandx{\michal}[2][1=]{\todo[inline,linecolor=gray,backgroundcolor=red!25,bordercolor=red,caption={\normalsize \textbf{Micha\l{}}},#1]{\normalsize #2}}
\newtheorem{theorem}{Theorem}
\newtheorem{definition}[theorem]{Definition}
\newtheorem{lemma}[theorem]{Lemma}
\newtheorem{claim}[theorem]{Claim}
\newtheorem{proposition}[theorem]{Proposition}
\newtheorem*{maingoal*}{Open Question}
\numberwithin{theorem}{section}
\numberwithin{lemma}{section}
\numberwithin{claim}{section}
\numberwithin{corollary}{section}
\numberwithin{definition}{section}
\numberwithin{observation}{section}
\numberwithin{proposition}{section}
\newcommand{\Prb}{\mathbb{P}}
\newcommand{\Prob}{\Prb}
\newcommand{\dist}{\mathrm{dist}}
\newcommand{\eps}{\varepsilon}
\newcommand{\Oh}{\mathcal{O}}
\newcommand{\Otilde}{\widetilde{\Oh}}
\newcommand{\Ot}{\Otilde}
\newcommand{\ff}{\mathcal{F}}
\newcommand{\nat}{\mathbb{N}}
\newcommand{\Ff}{\mathcal{F}}
\newcommand{\Dd}{\mathcal{D}}
\newcommand{\Ss}{\mathcal{S}}
\newcommand{\Cc}{\mathcal{C}}
\newcommand{\Gg}{\mathcal{G}}
\newcommand{\Pp}{\mathcal{P}}
\newcommand{\Kk}{\mathcal{K}}
\newcommand{\real}{\mathbb{R}_{+}}
\newcommand{\prob}[2]{\mathbb{P}_{#2}\left[ #1 \right]}
\newcommand{\Ex}[1]{\mathbb{E}\left[ #1 \right]}
\newcommand{\Vor}{\mathrm{Vor}}
\newcommand{\Diag}{\mathrm{Diag}}
\newcommand{\Rad}{\mathrm{Rad}}
\newcommand{\Allowed}{\mathrm{Allowed}}
\newcommand{\bal}{\mathrm{bal}}
\newcommand{\cmplx}{\mathrm{len}}
\newcommand{\enc}{\mathrm{Enc}}
\newcommand{\exc}{\mathrm{Exc}}
\renewcommand{\leq}{\leqslant}
\renewcommand{\geq}{\geqslant}
\renewcommand{\le}{\leqslant}
\renewcommand{\ge}{\geqslant}
\renewcommand{\setminus}{-}
\newcommand{\GoToAppendix}{$\bigstar$\xspace}
\title{A polynomial-time $\text{OPT}^\eps$-approximation algorithm for maximum independent set of connected subgraphs in a planar graph}
\date{}
\author{
    Jana Cslovjecsek\footnote{EPFL, Switzerland, \textsf{cslovj@gmail.com}.}
    \and
    Micha\l{} Pilipczuk\footnote{Institute of Informatics, University of
    Warsaw, Poland, \textsf{michal.pilipczuk@mimuw.edu.pl}. This work is a part of
    the project BOBR that has received funding from the European
    Research Council (ERC) under the European Union's Horizon 2020 research and
    innovation programme (grant agreement No 948057).}
    \and
    Karol W\k{e}grzycki\footnote{Saarland University and Max Planck Institute for Informatics,
        Saarbr\"ucken, Germany, \textsf{wegrzycki@cs.uni-saarland.de}. 
    This work is part of the project TIPEA that has
    received funding from the European Research Council (ERC) under the European Union's Horizon
    2020 research and innovation programme (grant agreement No 850979).}
}
\begin{document}

\maketitle

\thispagestyle{empty}
\begin{abstract}
    In the \textsc{Maximum Independent Set of Objects} problem, we are given an $n$-vertex planar graph
$G$ and a family $\mathcal{D}$ of $N$ \emph{objects}, where each object is a connected subgraph of
$G$. The task is to find a subfamily $\mathcal{F} \subseteq \mathcal{D}$ of
maximum cardinality that consists of pairwise disjoint objects. This problem is
$\mathsf{NP}$-hard and is equivalent to the problem of finding the maximum number of pairwise disjoint polygons in a given family of polygons in the plane.

As shown by Adamaszek et al. (J. ACM~'19), the problem admits a \emph{quasi-polynomial time
approximation scheme} (QPTAS): 
a $(1-\varepsilon)$-approximation algorithm whose running time is bounded by \mbox{$2^{\mathrm{poly}(\log(N),1/\epsilon)}
\cdot n^{\mathcal{O}(1)}$}. Nevertheless, to the best of our knowledge,
in the polynomial-time regime only the
trivial $\mathcal{O}(N)$-approximation is known for the problem in full generality. In the restricted setting where the objects are
pseudolines in the plane, Fox and Pach (SODA~'11) gave an
$N^{\varepsilon}$-approximation algorithm with running time $N^{2^{\tilde{\mathcal{O}}(1/\varepsilon)}}$, for any $\varepsilon>0$.

In this work, we present an $\text{OPT}^{\varepsilon}$-approximation algorithm for the problem that runs
in time $N^{\tilde{\mathcal{O}}(1/\varepsilon^2)} n^{\mathcal{O}(1)}$, for any $\varepsilon>0$, thus improving upon the result of Fox and Pach both in terms of generality and in terms of the running time. Our approach combines the methodology of \emph{Voronoi separators}, introduced by Marx and Pilipczuk (TALG~'22), with a new analysis of the approximation~factor.

\end{abstract}

\begin{picture}(0,0)
\put(462,-250)
{\hbox{\includegraphics[width=40px]{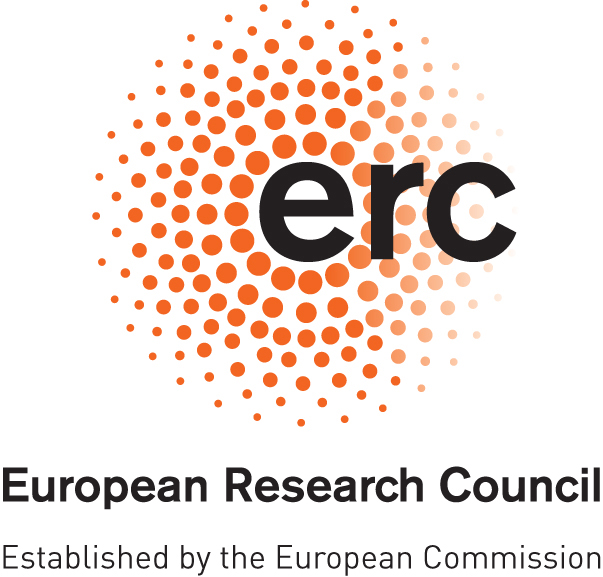}}}
\put(452,-310)
{\hbox{\includegraphics[width=60px]{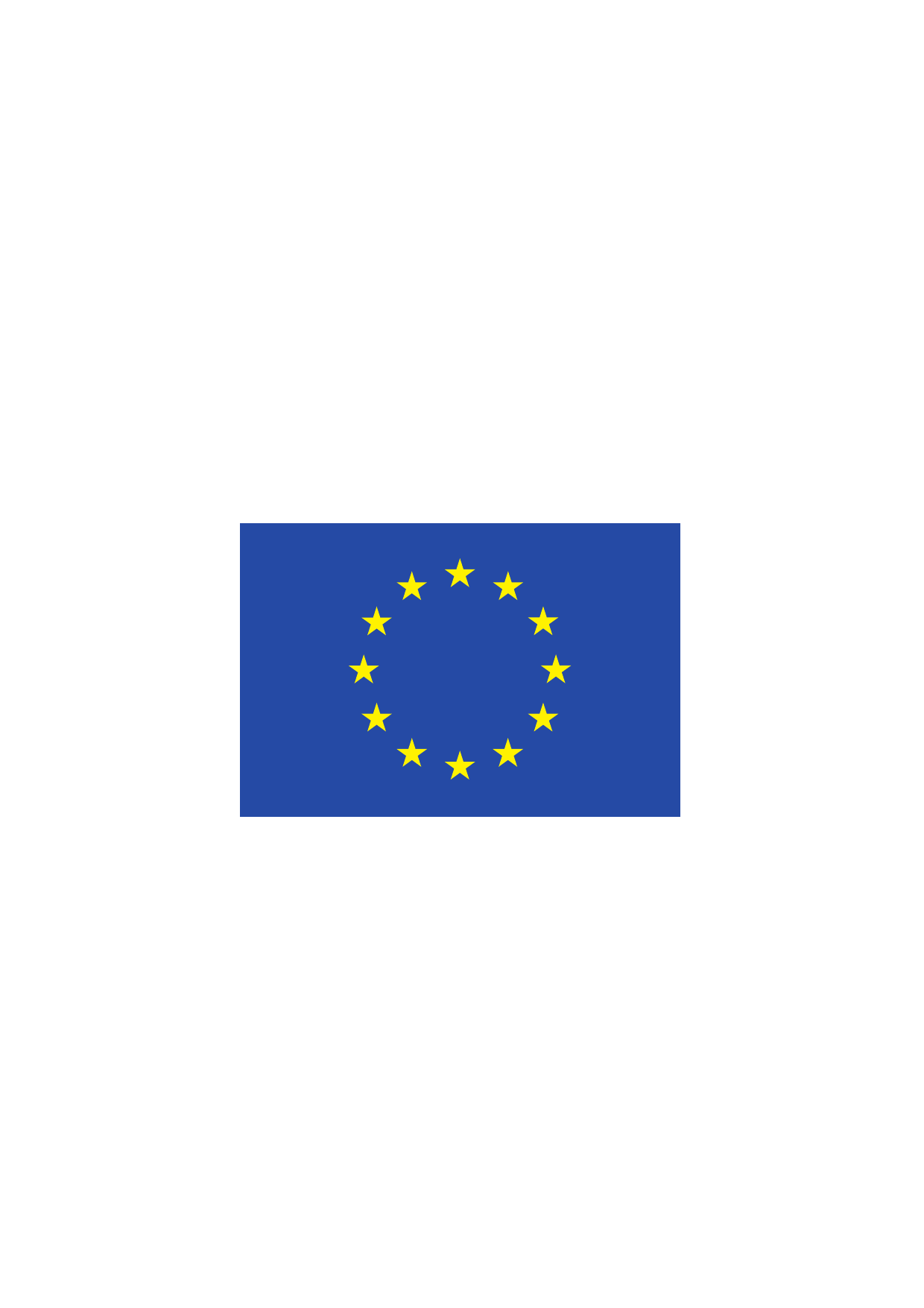}}}
\end{picture}

\clearpage
\setcounter{page}{1}

\section{Introduction}

\textsc{Maximum Independent Set} is a fundamental optimization problem defined as follows: Given a graph~$G$,  find a vertex subset $I \subseteq V(G)$ of maximum
cardinality such that no two vertices in $I$ are adjacent. The problem is $\mathsf{NP}$-complete and in full generality, very hard to approximate: as proved by H\r{a}stad~\cite{haastad1999clique}, there is no $|V(G)|^{1-\eps}$-approximation algorithm running in polynomial time for any $\eps > 0$, assuming $\mathsf{NP}\neq \mathsf{ZPP}$.

Therefore, special cases of the problem were investigated with the hope of finding more positive results. In this work we study the \textsc{Maximum Independent
Set of Objects} problem, which boils down to \textsc{Independent Set} in
intersection graphs of families of connected subgraphs of a planar graph. More
precisely, the input consists of a planar $n$-vertex graph $G$ and a set $\Dd$
of $N$ connected subgraphs of $G$, further called {\em{objects}}. The task is to find a maximum size subset $\Ff \subseteq \Dd$ such that objects in $\Ff$ are pairwise~disjoint. 

As observed in~\cite{esa18}, the \textsc{Maximum Independent
Set of Objects} problem is equivalent to the geometric problem of finding the
maximum number of pairwise disjoint polygons in a given family of polygons in the plane. The latter formulation is one of the
oldest problems in computational geometry, with many applications, for instance in
cartography~\cite{application-cartography-1,application-cartography-2,application-cartography-3},
interval scheduling~\cite{interval-scheduling}, chip
manufacturing~\cite{application-chips}, and even cellular
networks~\cite{application-cellular}.

The problem has also received significant attention from the theoretical community. For example, when objects are
\emph{fat} (e.g., disks or squares), polynomial-time approximation schemes
(PTASes) are known. The standard approach~\cite{ErlebachJS05} relies on the
quadtree combined with dynamic programming by Arora~\cite{Arora98}. Other
approaches include the usage of the planar separator theorem (see~\cite{Chan03}) or
local search strategies~\cite{chan2009approximation}; however, all of these
developments crucially rely on the geometric assumption of the fatness of the
objects. Another geometric setting that has been heavily studied is when all objects are axis-parallel rectangles. No PTAS is known for this case, while constant-factor approximation algorithms were proposed only very recently~\cite{galvez22,mitchell21}. 

Finally, Fox and Pach~\cite{FoxP11} considered the case when the considered
objects form a family of pseudolines. More generally, they assumed the objects
to be non-self-crossing curves in the plane such that any two of them intersect
in at most $k$ points, where $k$ is a constant. For this case, they gave an
$N^\eps$-approximation algorithm with a running time $N^{2^{\Ot_k(1/\eps)}}$;
the running time is independent of $N$, as their algorithm does not need to rely
on any geometric representation and can be given on input only the intersection
graph of the objects. The result of Fox and Pach improved the earlier work of
Agarwal and Mustafa~\cite{AgarwalM06}, who gave an $N^{1/2+o(1)}$-approximation
algorithm with a running time $\Oh(N^3)$ for the special case of segments in the plane.

Surprisingly, if one is willing to use time slightly higher than polynomial,
then much better approximation factors can be achieved. A recent line of work due to Adamaszek, Har-Peled, and Wiese~\cite{adamaszek19,adamaszek13,adamaszek-soda14} gave a \emph{quasipolynomial-time approximation scheme}
(QPTAS) for \textsc{Maximum Independent Set of Objects} in full generality with
a running time $2^{\mathrm{poly}(\log(N),\eps)} \cdot n^{\Oh(1)}$. See also the work of Pilipczuk et al.~\cite{esa18} for a somewhat streamlined presentation in the planar setting. Despite that, to the best of our knowledge, for the general \textsc{Maximum Independent Set of Objects} problem, the best approximation factor known to be achievable in polynomial time is the trivial $N$-approximation: pack one object.
%
%
%

\paragraph*{Our contribution.} In this work, we give a polynomial-time approximation algorithm for the \textsc{Maximum Independent Set of Objects} problem in full generality with a non-trivial approximation factor.

\begin{theorem}\label{thm:main}
    For every $\eps > 0$, \textsc{Maximum Independent Set of
    Objects} admits an $\text{OPT}^{\eps}$-approximation algorithm running in time
    $N^{\Ot(1/\eps^2)} \cdot n^{\Oh(1)}$, where $N$
    is the number of input objects, $n$ is the vertex count of the input graph, and $\text{OPT}$ is the cardinality of an optimum solution.
\end{theorem}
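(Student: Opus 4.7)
The plan is to apply the Voronoi-separator methodology of Marx and Pilipczuk inside a recursive branching scheme of depth $\tOh(1/\eps)$, where the ``width'' of each recursion node (the number of guessed objects) is also bounded by $\tOh(1/\eps)$. Given a hypothetical independent family $\Ff^\ast \subseteq \Dd$ of size $k$, the Voronoi-separator framework produces a subset $S \subseteq \Ff^\ast$ of \emph{sites} whose induced Voronoi decomposition of $G$ carves the current region into cells, such that the skeleton (the boundary of the decomposition) intersects a controlled number of objects of $\Ff^\ast$, and each cell contains a controlled fraction of $\Ff^\ast$. Crucially, the skeleton admits a polynomial-size combinatorial description in terms of $S$ plus some topological routing data, so it can be guessed in time $N^{|S| \cdot \polylog N}$.

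Operationally, the algorithm is a recursion in which, at each node, a set $S$ of $r = \tOh(1/\eps)$ sites together with the combinatorial description of the skeleton is guessed. Objects hit by the skeleton are discarded; the algorithm then recurses on each induced Voronoi cell. The recursion stops at depth $L = \lceil 1/\eps \rceil$, at which point the algorithm outputs one surviving object per leaf (or solves brute force on very small leaves). Because cells are separated by the skeleton, the union of these outputs across leaves is automatically an independent family in $\Dd$. The branching factor at each node is $N^{\tOh(1/\eps)}$, and with $L = \tOh(1/\eps)$ levels the total number of recursion leaves is $N^{\tOh(1/\eps^2)}$, which, together with polynomial-time work $n^{\Oh(1)}$ per node for constructing Voronoi decompositions and bookkeeping, matches the running-time bound asserted in the theorem.

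The main obstacle, and the place where the paper's advertised ``new analysis of the approximation factor'' enters, is that guessing only $r = \tOh(1/\eps)$ sites per level (instead of $\tOh(\sqrt{k})$, as in the quasi-polynomial regime of Adamaszek et al.) does \emph{not} yield balanced cells in the standard sense, so one cannot simply argue that $|\Ff^\ast \cap \text{cell}|$ shrinks by a constant factor per level. The hard part will be proving a structural lemma to the effect that, for any independent family $\Ff^\ast$ of size $k$, there exists a nested sequence of $r$-site Voronoi separators of depth $L$ such that (a) the cumulative number of objects of $\Ff^\ast$ hit by all skeletons across all $L$ levels is at most $k - k^{1-\eps}$, and (b) at depth $L$ every surviving cell contains at most $k^{\eps}$ objects of $\Ff^\ast$, so that returning one object per cell still yields $\Omega(k^{1-\eps}) = \Omega(\text{OPT}/\text{OPT}^{\eps})$ objects. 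Establishing this trade-off — balancing skeleton losses against cell shrinkage over only constantly many levels, via an averaging/exchange argument over sub-skeletons of the hypothetical optimum — is exactly the novel ingredient beyond the QPTAS of Adamaszek--Har-Peled--Wiese, where a $\Theta(\log N)$-depth recursion with $(1-\eps)$-fraction loss per level is sufficient but forces quasi-polynomial running time.
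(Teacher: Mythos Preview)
Your proposal has a genuine gap and also misses the two central ideas of the paper.

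First, the counting in your scheme does not close. With $r=\tOh(1/\eps)$ sites, a single Voronoi separator produces at most $O(r)$ cells; after $L=\lceil 1/\eps\rceil$ levels the decomposition tree has at most $r^L=2^{\tOh((1/\eps)\log(1/\eps))}$ leaves, a quantity independent of $N$ and $k$. Returning ``one object per leaf'' therefore yields only a constant number of objects, not $\Omega(k^{1-\eps})$. Equivalently, your condition (b) that every depth-$L$ cell contains at most $k^\eps$ objects is unachievable: a balanced light separator reduces the cell size by a constant factor, so after $1/\eps$ levels you are at $(3/4)^{1/\eps}k$, not $k^\eps$. To reach $k^\eps$ you would need depth $\Theta(\log k)$, which is exactly the quasi-polynomial regime you are trying to avoid.

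Second, you ignore the \emph{stacking problem}. In any recursive scheme the boundary of the current region is a union of pieces of all separators chosen along the path from the root; its complexity grows with depth. Unless this growth is controlled, the number of candidate separators to enumerate at deep nodes blows up. The paper's first new idea is to apply \emph{two} short cycles at every step: one balanced with respect to the solution $\Ff$ and one balanced with respect to the accumulated boundary $\Kk$, so that each resulting face is again a ``Swiss-cheese separator'' of complexity $\le s=\tOh(1/\eps^2)$ (this is property~\ref{enum:partition} of \cref{lem:nice-separator}). This is why the states of the DP stay polynomial in size.

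Third, the paper does \emph{not} bound the recursion depth at all. Instead it runs dynamic programming over the partial order of Swiss-cheese separators by region containment, and proves $\ApproxIS[\Pp,\Kk,f]\ge \opt(\Pp,\Kk,f)^{1-4\eps}$ by induction using the elementary inequality of \cref{prop:inequality}: if $a_i\le cA$ and $\sum a_i\ge c^\delta A$ then $\sum a_i^{1-\delta}\ge A^{1-\delta}$. This is the advertised ``new analysis of the approximation factor''; it converts the multiplicative loss $(1-\eps)$ per level into an additive exponent loss, with no dependence on the depth of the decomposition. Your sketch replaces this with a fixed-depth argument that, as noted above, cannot work.
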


Observe that \cref{thm:main} improves the result of Fox and Pach~\cite{FoxP11} in terms of the running time ($N^{\Ot(1/\eps^2)} \cdot n^{\Oh(1)}$ instead of $N^{2^{\Ot_k(1/\eps)}}$), in terms of the approximation factor ($\text{OPT}^{\eps}$ instead of $N^{\eps}$), and in terms of generality (we do not assume the objects to be curves that pairwise intersect in at most $k$ points). The caveat is that our algorithm has to be provided a representation of the input consisting of a planar graph $G$ and a family of objects $\Dd$ in $G$, while the algorithm of Fox and Pach can rely only on the intersection graph of $\Dd$.

\paragraph*{Our techniques.}
The backbone of our proof of \cref{thm:main} is the separator approach used for
the design of QPTASes~\cite{adamaszek19,adamaszek13,adamaszek-soda14,esa18}. The
crux of this approach lies in the existence of a {\em{light separator}}, which
splits the solution in a balanced way and interacts with only a small fraction
of the solution. The following informal statement sketches the main properties of such a separator. For a formalization, see e.g.~\cite[Lemma~3]{esa18}.

\begin{lemma}[Light Separator Lemma]\label{lem:light-sep-lemma}
 Let $(G,\Dd)$ be an instance of  \textsc{Maximum Independent Set of Objects}, $\Ff\subseteq \Dd$ be any independent set of objects, and $\delta>0$ be an accuracy parameter. Then there exists a separator $S$ in $G$ of complexity $\mathrm{poly}(1/\delta)$ such that $S$ intersects at most $\delta|\Ff|$ objects in $\Ff$ and $S$ breaks $\Ff$ into subsets of size at most $\frac{2}{3}|\Ff|$.
\end{lemma}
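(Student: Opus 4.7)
The plan is to build $S$ following the Voronoi separator paradigm of Marx and Pilipczuk. First I would contract each $f\in\Ff$ to a single vertex (valid since the objects are disjoint connected subgraphs), producing a planar multigraph $G^{\star}$ equipped with a distinguished vertex set $V_{\Ff}\subseteq V(G^{\star})$ of size $|\Ff|$. A low-complexity balanced separator of $V_{\Ff}$ in $G^{\star}$ translates back to the required separator in $G$: the objects it intersects are exactly those whose contracted copies get absorbed by the separator.

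Next, I would sample a subset $R\subseteq V_{\Ff}$ of size $k=\Theta(1/\delta^{2})$ uniformly at random and compute the shortest-path Voronoi partition of $V(G^{\star})$ with centers $R$ (consistent tie-breaking). Contracting each Voronoi cell yields a planar quotient graph $H$ on $k$ vertices; weight its vertices by the number of $V_{\Ff}$-points in the corresponding cell and apply the Lipton--Tarjan planar separator theorem to obtain a weighted $\tfrac{2}{3}$-balanced vertex separator $S_H\subseteq V(H)$ with $|S_H|=\Oh(\sqrt{k})=\Oh(1/\delta)$. The separator $S$ in $G$ is then taken to be the union of (i) the objects whose contracted vertices lie in cells of $S_H$ and (ii) the shortest paths realizing the Voronoi boundaries incident to $S_H$-cells. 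Since this description involves $\poly(1/\delta)$ shortest paths, the structural complexity of $S$ is $\poly(1/\delta)$; and because $S_H$ is $\tfrac{2}{3}$-balanced with respect to the $V_{\Ff}$-weights, $S$ splits $\Ff$ into parts of size at most $\tfrac{2}{3}|\Ff|$.

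The hard part will be bounding the expected number of objects in $\Ff$ that intersect $S$. An object $f$ is cut in one of two ways: either its Voronoi cell belongs to $S_H$, or $v_f$ lies on a shortest path used to trace one of the boundaries around $S_H$. For the first, symmetry of the uniform random sample $R$ together with $|S_H|=\Oh(\sqrt{k})$ pins the expected fraction of $V_{\Ff}$ absorbed into $S_H$-cells to $\Oh(1/\sqrt{k})=\Oh(\delta)$. For the second, following Marx--Pilipczuk one realizes each Voronoi boundary as a single shortest path whose interior can be taken statistically disjoint from $V_{\Ff}$: if some $f\in\Ff$ were to lie on many candidate boundary paths, then by the symmetry of the sampling $f$ itself would have been placed into $R$ with good probability, giving a charging argument that bounds this contribution also by $\Oh(\delta|\Ff|)$. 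Summing the two estimates yields an expected intersection of at most $\delta|\Ff|$, and the probabilistic method then delivers the required deterministic separator.
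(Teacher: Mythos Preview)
First, a framing point: the paper does not actually prove \cref{lem:light-sep-lemma}. It is stated only as an informal sketch and the reader is referred to \cite[Lemma~3]{esa18} for a formalization; the paper's own technical work (\cref{sec:basic-toolbox,sec:separator}) redevelops the machinery directly for its Swiss-cheese separator theorem. So the relevant comparison is with the methodology of \cite{esa15,esa18} as recapitulated in \cref{sec:basic-toolbox,sec:sampling}.

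Your outline has the right architecture (sample $\Theta(1/\delta^{2})$ objects, build a Voronoi structure, extract a short balanced separator, argue few conflicts), but there is a genuine gap in step~(i) of the intersection bound. You claim that ``symmetry of the uniform random sample $R$ together with $|S_H|=\Oh(\sqrt{k})$ pins the expected fraction of $V_{\Ff}$ absorbed into $S_H$-cells to $\Oh(1/\sqrt{k})$.'' This does not follow: the Lipton--Tarjan separator $S_H$ is chosen \emph{after} the sample and \emph{depending on} the cell weights, so there is no symmetry to invoke, and the planar separator theorem gives no bound on the weight carried by the separator vertices themselves---it only balances the two sides. A single Voronoi cell could in principle contain a constant fraction of $V_{\Ff}$, and nothing in your argument rules out that such a cell ends up in $S_H$. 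Relatedly, ``Voronoi boundaries incident to $S_H$-cells'' are not single shortest paths; a cell boundary may involve many branching points, so the number of paths you must include is not obviously $\Oh(\sqrt{k})$.

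The fix, and what the paper actually does, is to avoid a vertex separator of whole cells altogether. One works in the \emph{radial graph} $\Rad_\Ss$ of the Voronoi diagram and extracts a balanced \emph{cycle} of length $\Oh(\sqrt{k})$ (\cref{thm:separator}); the separator then consists of $\Oh(\sqrt{k})$ objects on the cycle plus $\Oh(\sqrt{k})$ \emph{spokes} (shortest paths from branching points to their nearest objects). The conflict bound is obtained not by a symmetry/charging argument but by a union bound over the polynomially many candidate branching points (singular faces, \cref{lem:few-singular-faces,lem:types-are-enough}): one shows that with high probability no spoke is ``heavy,'' i.e.\ in conflict with more than $\eta=\Oh(\delta^{2}\log(1/\delta))\,|\Ff|$ objects (\cref{lem:no-heavy-spoke}). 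Multiplying $\Oh(\sqrt{k})$ spokes by $\eta$ then yields the $\delta|\Ff|$ bound. Your step~(ii) gestures at this, but step~(i) should simply disappear once you use a cycle rather than a cell separator.
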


In the above statement we did not specify the precise notion of a separator that
we use, the definition of its complexity, and what it means for $\Ff$ to be split
by a separator. Following Pilipczuk et al.~\cite{esa18}, in this work, we use the
notion of a {\em{radial separator}}, which is essentially a cycle separator in
the radial graph of the Voronoi diagram induced by the solution $\Ff$.
Importantly, there are only $|\Dd|^{\Oh(\ell)}$ candidates for a radial separator of complexity at most $\ell$, hence the separator $S$ whose existence is asserted by \cref{lem:light-sep-lemma} can be guessed among $|\Dd|^{\mathrm{poly}(1/\delta)}$ candidates.

The existence of balanced light separators suggests a simple recursive strategy: guess a balanced light separator $S$ of the optimum solution $\Ff$, sacrifice objects intersected by $S$, and recurse on subproblems based on $S$. This strategy is applied until the optimum solution size is $\mathrm{poly}(1/\eps)$, in which case an optimum solution can be found by brute-force. Clearly, the recursion depth is $\Oh(\log N)$ and it is not hard to see that every level of the recursion may incur a loss of a $\delta$-fraction of the optimum solution $\Ff$. Consequently, one needs to set $\delta=\frac{\eps}{\log N}$ to make sure that the total loss is bounded by $\eps|\Ff|$, and hence all the separators within the algorithm need to be guessed among quasi-polynomially many candidates. This is the fundamental reason why the algorithms of~\cite{adamaszek19,adamaszek13,adamaszek-soda14,esa18} are QPTASes, and not PTASes.

Our proof of \cref{thm:main} relies on the same concept of balanced light
separators provided by \cref{lem:light-sep-lemma}, but both the algorithm and the
analysis of the approximation factor are very different. In essence, we set
$\delta=\mathrm{poly}(1/\eps)$ instead of $\delta=\mathrm{poly}(1/\eps,\log N)$
and use balanced light separators for such $\delta$ to form a hierarchical
decomposition of the optimum solution $\Ff$ using so-called {\em{Swiss-cheese
separators}} of total complexity $\mathrm{poly}(1/\eps)$; each Swiss-cheese
separator consists of several radial separators. Importantly, at every step of
the decomposition, we use {\em{two}} radial separators to split in a balanced
way both the currently considered part of $\Ff$ and the radial separators ``stacked'' by the previous levels of the recursion. This way, the total complexity of each Swiss-cheese separator present in the decomposition is bounded by $\mathrm{poly}(1/\eps)$.
We note that this stacking problem did not really arise in the QPTASes of~\cite{adamaszek19,adamaszek13,adamaszek-soda14,esa18}, because there one can afford an additional $\log N$ factor in the total size of the separators stacked during the recursion, and the recursion tree has depth $\Oh(\log N)$ anyway.

Once the existence of a hierarchical decomposition as above is argued, we perform careful dynamic programming that guesses the decomposition ``on the fly'' and at the same time computes an approximate solution to the problem. The main novelty here lies in a new analysis of the approximation factor: we show that an $\mathrm{OPT}^\eps$-bound on the approximation factor can be proved by induction on the hierarchical decomposition, even without exploiting any bound on the depth of this decomposition.

\paragraph*{Organization.}
In \cref{sec:basic-toolbox} we recall the machinery of Voronoi diagrams and separators due to Marx and Pilipczuk~\cite{esa15}. \cref{sec:separator} introduces the notion of a Swiss-cheese separator and provides a balanced separator theorem for them. In \cref{sec:algorithm} we gather all the tools and prove \cref{thm:main}. Proofs of statements marked with \GoToAppendix are deferred
to~\cref{sec:appendix-a,sec:proof-separator}.

\section{Basic toolbox}\label{sec:basic-toolbox}

In this section, we recall the basic toolbox introduced in~\cite{esa15,esa18}. Therefore, the text in this section is largely a paraphrase and an adaptation of the corresponding sections of~\cite{esa15,esa18}, and serves as an introduction to the methodology of Marx and Pilipczuk~\cite{esa15}. We remark that this methodology was also used by Cohen-Addad, Pilipczuk, and Pilipczuk~\cite{Cohen-AddadPP19} in the context of facility location problems in planar graphs.


Recall that in the {\sc{Maximum Independent Set of Objects}} problem the input
consists of a planar graph $G$ and a family of {\em{objects}} $\Dd$, where each
object is a connected subgraph of $G$. We fix such an instance for the remainder
of the paper. Further, we fix an embedding of $G$ in the sphere $\mathbb{S}^2$
and by triangulating every face, we assume that $G$ is a triangulation. For the
purpose of defining Voronoi diagrams on $G$, it will be convenient to impose a distance metric in $G$. For this, we assume that each edge $uv$ of $G$ is assigned a positive length, so that we may consider the shortest path metric in $G$, denoted $\dist(\cdot,\cdot)$. By perturbing the edge lengths slightly, we can make sure that all distances between pairs of vertices in $G$ are pairwise different; handling this perturbation increases the running times of our algorithms by only a polynomial factor. Note that thus, shortest paths are unique in $G$.

Let $\Ff\subseteq \Dd$ be an {\em{independent set}} in $\Dd$, that is, a
subfamily consisting of pairwise disjoint objects. We first define the
{\em{Voronoi partition}}: it is the partition of $V(G)$ into {\em{Voronoi
cells}} $\{\Vor_\Ff(p)\colon p\in \Ff\}$ such that $\Vor_\Ff(p)$ consists of
vertices $u$ of $G$ such that $\dist(u,p)<\dist(u,p')$ for all $p'\in \Ff$,
$p'\neq p$. Here, for a vertex $u$ and an object $p$ we denote
$\dist(u,p)\coloneqq \min_{v\in p} \dist(u,v)$. As the distances in $G$ are pairwise different, there are no ties.

Now comes the key definition from~\cite{esa15}, that of the {\em{Voronoi diagram}} of an independent set $\Ff\subseteq \Dd$. Such a diagram $\Diag_\Ff$ is constructed through a three-step process:
\begin{enumerate}[nosep,label=(\roman*)]
 \item For every $p\in \Ff$, fix an arbitrary spanning tree $T(p)$ of $p$. Then extend $T(p)$ to a spanning tree $\widehat{T}(p)$ of $G[\Vor_\Ff(p)]$ by adding, for every $u\in \Vor_\Ff(p)$, a shortest path from $u$ to $p$.
 \item Take the dual $G^\star$ of $G$ and remove from $G^\star$ all the edges dual to the edges of $\widehat{T}(p)$, for every $p\in \Ff$.
 \item From the obtained graph iteratively remove vertices of degree $1$ and
     bypass vertices of degree $2$, where {\em{bypassing}} a vertex of degree
     $2$ means replacing it together with the incident edges with a new edge connecting the neighbors. Once these operations are applied exhaustively, the multigraph obtained at the end is the Voronoi diagram $\Diag_\Ff$.
\end{enumerate}
As proved in~\cite[Lemmas 4.4, 4.5 and Section 4.4]{esa15}, the diagram $\Diag_\Ff$ is a $3$-regular connected multigraph with a sphere embedding naturally inherited from $G$ ($3$-regularity follows from the assumption that $G$ is triangulated). Moreover, $\Diag_\Ff$ has exactly $|\Ff|$ faces that bijectively correspond to the objects of $\Ff$, in the sense that each object $p\in \Ff$ is the unique object of $\Ff$ that is entirely contained in the face of $\Diag_\Ff$ corresponding to $p$. The vertices of $\Diag_\Ff$ are called {\em{branching points}}. Note that every branching point $f$ is actually a (triangular) face of the original graph $G$, and the duals of three edges of $f$ participate in the three different edges of $\Diag_\Ff$ that are incident with $f$.

\paragraph*{Singular faces.} Next, we recall the notion of {\em{singular faces}}, which characterize candidates for branching points in a diagram. The following is a paraphrase of the definition from~\cite{esa15}.

A face $f$ of $G$ is a {\em{singular face}} of {\em{type $1$}} for a triple of
disjoint objects $\Pp=\{p_1,p_2,p_3\}\subseteq \Dd$ if the vertices of $f$
belong to pairwise different cells of the Voronoi partition $\Vor_\Pp$. Next,
$f$ is a singular face of {\em{type $2$}} for such a triple $\Pp$ if two
vertices $u_1,u_2$ of $f$ belong to $\Vor_\Pp(p_1)$, the third vertex $u_3$ of
$f$ belongs to  $\Vor_\Pp(p_2)$, and the cycle formed by closing the $u_1$-$u_2$
path in $\widehat{T}(p_1)$ using the edge $u_1u_2$ of $f$ separates $p_2$ and
$p_3$ on $\mathbb{S}^2$. Finally, $f$ is a {\em{singular face}} of {\em{type
$3$}} for a quadruple of disjoint objects $\Pp=\{p_0,p_1,p_2,p_3\}\subseteq \Dd$ if
all vertices of $f$ belong to $\Vor_\Pp(p_0)$, but objects $p_1,p_2,p_3$ are
pairwise separated on the sphere by the union of the edges of $f$ and the minimal subtree of $\widehat{T}(p_0)$ that connects the vertices of $f$. See \cref{fig:fig8-in-esa15}.

\begin{figure}[ht!]
    \centering
    \begin{subfigure}[1]{0.25\textwidth}
    \def\svgwidth{\textwidth}
    \input{./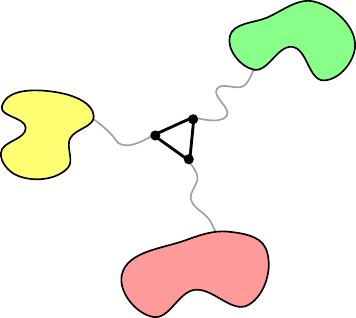_tex}
    \end{subfigure}
    \hspace{1cm}
    \begin{subfigure}[1]{0.25\textwidth}
    \def\svgwidth{\textwidth}
    \input{./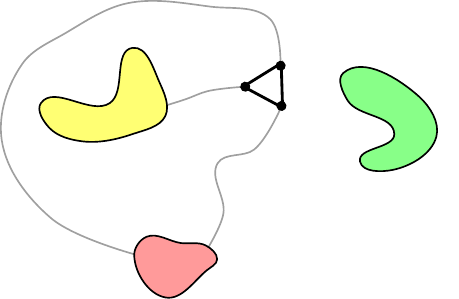_tex}
    \end{subfigure}
    \hspace{1cm}
    \begin{subfigure}[1]{0.25\textwidth}
    \def\svgwidth{\textwidth}
    \input{./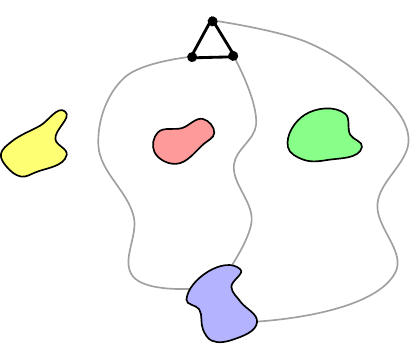_tex}
    \end{subfigure}
    
    \caption{Singular faces of types 1, 2 and 3. Relevant spokes depicted in gray. Figure reproduced from~\cite{esa15}, with the consent of the authors.}
    \label{fig:fig8-in-esa15}
\end{figure}

We will need the following two statements from~\cite{esa15}: there are few candidates for singular faces, and branching points of any diagram are always singular faces.

\begin{lemma}[Lemmas 4.8,4.9 and 4.10 in~\cite{esa15}]
    \label{lem:few-singular-faces}
    For each independent triple of objects $(p_1,p_2,p_3)$ there are at most two
    singular faces of type $1$ for $(p_1,p_2,p_3)$, and at most one singular
    face of type $2$ for $(p_1,p_2,p_3)$. For each independent quadruple of
    objects $(p_0,p_1,p_2,p_3)$, there is at most one singular face of type $3$
    for $(p_0,p_1,p_2,p_3)$.
\end{lemma}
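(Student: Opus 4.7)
The plan is to prove the three bounds separately, using Euler's formula on the Voronoi diagram $\Diag_\Pp$ for type $1$, and topological uniqueness arguments based on the Jordan curve theorem and the rigidity of the spanning trees $\widehat{T}(p_i)$ for types $2$ and $3$.

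For type $1$, I would show that every type-$1$ singular face $f$ for $\Pp = \{p_1,p_2,p_3\}$ is in fact a branching point of $\Diag_\Pp$. Since the three vertices of $f$ lie in three distinct Voronoi cells, all three primal edges of $f$ cross cell boundaries and so cannot belong to any $\widehat{T}(p_i)$ (whose edges lie inside a single cell). Thus all three dual edges of $f$ survive step~(ii) of the construction. Moreover, the three sectors of the intermediate graph around $f$ belong to three distinct faces (one per cell), so the cleanup in step~(iii) cannot reduce $f$'s degree: doing so would force two of these sectors to merge and hence reduce the face count below $|\Pp|$, contradicting the fact that step~(iii) preserves the set of faces. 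Consequently $f$ is a vertex of $\Diag_\Pp$, and Euler's formula ($V - E + F = 2$, $2E = 3V$, $F = 3$) applied to the $3$-regular connected multigraph $\Diag_\Pp$ on $\mathbb{S}^2$ yields $V = 2$, giving the desired bound of $2$.

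For type $2$, I would associate to each candidate singular face $f$ the Jordan curve $C_f \subseteq \mathbb{S}^2$ formed by the $u_1$-$u_2$ path in $\widehat{T}(p_1)$ together with the face edge $u_1 u_2$; by definition $C_f$ separates $p_2$ from $p_3$. If two type-$2$ singular faces $f \neq f'$ existed, the cycles $C_f, C_{f'}$ would be fundamental cycles (of the non-tree edges $u_1 u_2$ and $u'_1 u'_2$) with respect to the same spanning tree $\widehat{T}(p_1)$ and would necessarily share the subtree of $\widehat{T}(p_1)$ spanning $p_1$. A case analysis of the combined planar structure $C_f \cup C_{f'}$, combined with the uniqueness of tree paths in $\widehat{T}(p_1)$, would force a topological contradiction with the requirement that both cycles separate $p_2$ from $p_3$. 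For type $3$, given the quadruple $\Pp = \{p_0,p_1,p_2,p_3\}$, I would associate to each candidate $f$ the ``triod'' $\Theta_f$ consisting of the boundary of $f$ together with the minimal subtree of $\widehat{T}(p_0)$ connecting $f$'s three vertices; by definition $\Theta_f$ partitions $\mathbb{S}^2 \setminus p_0$ into three regions, one per $p_i$. The cyclic order in which $p_1, p_2, p_3$ surround $p_0$ on the sphere is a topological invariant of the configuration, and any type-$3$ triod realizes this cyclic order via three ``legs'' inside $\widehat{T}(p_0)$; hence two triods $\Theta_f, \Theta_{f'}$ attached to the same tree $\widehat{T}(p_0)$ and inducing the same cyclic partition must coincide, forcing $f = f'$.

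The main obstacle will be the topological uniqueness arguments for types $2$ and $3$, where one must carefully translate the combinatorial separation conditions into precise statements about simple closed curves and triods on $\mathbb{S}^2$, and then extract uniqueness from the rigid structure of the spanning trees $\widehat{T}(p_i)$ (in particular, from the fact that tree paths are unique). I expect type $3$ to be the most delicate, since it involves three legs emanating from $p_0$ rather than a single separating cycle, and keeping track of the sides of the triod without confusing them with artefacts of the spanning tree's arbitrariness is the technically fiddly part.
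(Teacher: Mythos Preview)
The paper does not contain its own proof of this lemma: it is stated as a direct citation of Lemmas~4.8--4.10 of~\cite{esa15}, with no argument given. So there is nothing in the present paper to compare your proposal against.

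On the substance: your type-$1$ argument is essentially correct and is indeed the approach taken in~\cite{esa15}. Showing that a type-$1$ singular face for $\Pp=\{p_1,p_2,p_3\}$ survives as a branching point of $\Diag_\Pp$, and then applying Euler's formula to the $3$-regular sphere-embedded multigraph with $F=3$ to get $V=2$, is exactly the right line; your step-(iii) survival argument is a bit informal but can be made precise.

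For types~$2$ and~$3$, however, what you have written is a plan, not a proof. Phrases like ``a case analysis \ldots would force a topological contradiction'' and ``two triods \ldots must coincide, forcing $f=f'$'' are placeholders for the actual work. There is a concrete gap in your type-$2$ sketch: two fundamental cycles $C_f,C_{f'}$ of $\widehat{T}(p_1)$ can perfectly well both separate $p_2$ from $p_3$ without any contradiction (for instance if they are nested), so the ``topological contradiction'' you anticipate does not follow from what you wrote. The real argument must use the additional constraint that the third vertex $u_3$ of $f$ lies in $\Vor_\Pp(p_2)$, which pins down on which side of $C_f$ the face $f$ itself sits; you have not used this. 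Similarly for type~$3$, the assertion that two triods realizing the same cyclic partition ``must coincide'' is the entire content of the lemma and cannot simply be asserted. If you intend to reconstruct these proofs, you should work through~\cite{esa15} rather than treat the uniqueness as a routine exercise.
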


\begin{lemma}[Lemma 4.12 in~\cite{esa15}]
    \label{lem:types-are-enough}
    Let $\Ff \subseteq \Dd$ be an independent set of objects, and let
    $\Diag_\Ff$ be the Voronoi diagram of $\Ff$. Then, every branching point of
    $\Diag_\ff$ is either a type-$1$ singular face for some triple of objects from $\Ff$, or a
    type-$2$ singular face for some triple of objects from $\Ff$, or a type-$3$
    singular face for some quadruple of objects from $\Ff$.
\end{lemma}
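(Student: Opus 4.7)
The plan is to prove the statement by case analysis on how the three vertices of the branching face distribute across the Voronoi cells. Fix a branching point $f$ of $\Diag_\Ff$, so $f$ is a triangular face of $G$ with vertices $u_1,u_2,u_3$ whose three dual edges all survive the construction of $\Diag_\Ff$. Let $p_i\in\Ff$ denote the unique object with $u_i\in \Vor_\Ff(p_i)$, and set $m:=|\{p_1,p_2,p_3\}|$. If $m=3$, then the vertices of $f$ sit in three distinct cells, so by the very definition of a type-$1$ singular face, $f$ is type-$1$ for the triple $(p_1,p_2,p_3)$ and we are done.

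Suppose $m=2$, say $p_1=p_2=:p$ and $p_3\neq p$. Since both $u_1,u_2\in\Vor_\Ff(p)$, the edge $u_1u_2$ is internal to $G[\Vor_\Ff(p)]$; if it belonged to $\widehat{T}(p)$, its dual would be deleted in step~(ii), contradicting the fact that $f$ hosts three surviving dual edges. Hence $u_1u_2\notin\widehat{T}(p)$, and the cycle $C$ obtained by closing the unique $u_1$--$u_2$ path in $\widehat{T}(p)$ with the edge $u_1u_2$ is a simple cycle on $\mathbb{S}^2$ lying inside $\Vor_\Ff(p)\cup\{u_1u_2\}$ and disjoint from $p_3$. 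To establish type-$2$, I would produce a witness $p_3'\in\Ff\setminus\{p\}$ lying in the open disk of $\mathbb{S}^2\setminus C$ not containing $p_3$: the dual of $u_1u_2$ is an edge of $\Diag_\Ff$ incident to $f$ and it borders two faces, one hosting $p$ (because both its endpoints lie in $\Vor_\Ff(p)$) and the other hosting some $p_3'\neq p$. A planarity argument tracing this edge across $C$ and using that each object lies strictly inside its own face of $\Diag_\Ff$ places $p_3'$ on the correct side; relabeling yields that $f$ is a type-$2$ singular face for $(p,p_3',p_3)$.

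If $m=1$, all three vertices lie in $\Vor_\Ff(p_0)$ for a common $p_0$, and the same argument as before forces every edge of $f$ to avoid $\widehat{T}(p_0)$. Let $T_f$ be the minimal subtree of $\widehat{T}(p_0)$ spanning $\{u_1,u_2,u_3\}$, which is either a path or a $Y$. A short Euler-characteristic computation shows that $T_f\cup \bnd f$ partitions $\mathbb{S}^2$ into exactly four faces: the interior of $f$ together with three outer regions $R_1,R_2,R_3$, each incident to exactly one edge of $f$. The dual of each edge of $f$ escapes into the corresponding $R_i$, and repeating the argument of Case $m=2$ shows that it borders the $p_0$-face of $\Diag_\Ff$ on one side and a distinct face---corresponding to some object $p_i\in\Ff\setminus\{p_0\}$ located inside $R_i$---on the other. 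The three objects $p_1,p_2,p_3$ are then pairwise separated by $T_f\cup\bnd f$, exhibiting $f$ as a type-$3$ singular face for $(p_0,p_1,p_2,p_3)$.

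The main obstacle, common to Cases $m=2$ and $m=1$, is the step that converts the combinatorial fact ``the surviving dual edge of $f$ has a face of $\Diag_\Ff$ on its far side'' into the topological fact ``an actual object of $\Ff$ sits inside the prescribed disk or region of $\mathbb{S}^2$''. This relies on the bijection between faces of $\Diag_\Ff$ and objects of $\Ff$, the fact that each object sits entirely inside its own face, and careful tracking of the surviving dual edges of $f$ through the pruning of degree-$1$ and degree-$2$ vertices in step~(iii). A further subtlety is that the singular-face definitions are phrased with respect to the \emph{small} partition $\Vor_\Pp$ induced by only the relevant $3$- or $4$-tuple of objects; verifying that the cycle $C$ and the subtree $T_f$ behave identically when $\widehat{T}(p_1)$ or $\widehat{T}(p_0)$ is recomputed for $\Vor_\Pp$ instead of $\Vor_\Ff$ is a short monotonicity argument, using that restricting the set of sites only \emph{enlarges} each retained cell.
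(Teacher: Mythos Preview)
The paper does not prove this lemma; it is imported verbatim as Lemma~4.12 of~\cite{esa15} and used as a black box, so there is no in-paper proof to compare against. Your case split on $m=|\{p_1,p_2,p_3\}|$ is exactly the right structure and is also how the argument in~\cite{esa15} is organised.

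The gaps you flag yourself are the real content, and one of your intermediate claims in the $m=2$ case is actually false as stated. You write that the edge of $\Diag_\Ff$ containing the dual of $u_1u_2$ ``borders two faces, one hosting $p$ \ldots\ and the other hosting some $p_3'\neq p$''. But near $f$ the two wedges on either side of that dual edge are precisely the $u_1$-wedge and the $u_2$-wedge, and since $u_1,u_2\in\Vor_\Ff(p)$ both wedges lie in the face of $p$. Hence that edge of $\Diag_\Ff$ has the $p$-face on \emph{both} sides (it is a bridge), and you cannot read off $p_3'$ from the face across it. The correct way to produce $p_3'$ is to argue that the region of $\mathbb{S}^2\setminus C$ not containing $p_3$ must contain at least one face of $\Diag_\Ff$ different from the $p$-face --- otherwise every dual edge inside that region would eventually be removed by the degree-$1$/degree-$2$ pruning in step~(iii), contradicting that the dual of $u_1u_2$ survives as part of an edge of $\Diag_\Ff$. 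That face furnishes the object $p_3'$. The $m=1$ case needs the analogous fix for each of the three dual edges of $f$.

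With this correction, together with the monotonicity check you already mention for passing from $\widehat{T}(\cdot)$ computed over $\Vor_\Ff$ to $\widehat{T}(\cdot)$ computed over $\Vor_\Pp$, your outline matches the proof in~\cite{esa15}.
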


\paragraph*{Radial graph.}
Next, we present the {\em{radial graph}} $\Rad_\Ff$ of the diagram
$\Diag_\Ff$ (see~\cite{esa15,esa18}). This is the bipartite multigraph whose vertex set consists of the objects of $\Ff$ (which, recall, correspond to the faces of $\Diag_\Ff$) and the vertices of $\Diag_\Ff$ (that is, branching points). For every branching point $f$ and every vertex $u$ of~$f$, we create in $\Diag_\Ff$ one edge between $f$ and the unique object $p\in \Ff$ such that $u\in \Vor_\Ff(p)$. For convenience, this edge of $\Rad_\Ff$ will be {\em{labelled}} by the vertex $u$. Note that $\Rad_\Ff$ may contain two or more edges connecting the same object $p$ and a branching point $f$; this happens when $\Diag_\Ff$ is incident to one or more bridges. However, every branching point will have degree exactly $3$ in $\Rad_\Ff$.

Consider an edge $pf$ of $\Rad_\Ff$ with label $u\in f$. The {\em{spoke}} induced by $pf$ in the diagram $\Diag_\Ff$ is the shortest path $P$ from $u$ to the (nearest vertex of) object $p$. Note that $P$ is entirely contained in the tree $\widehat{T}(p)$, which means that it is also entirely contained in the face of $\Diag_\Ff$ corresponding to $p$. We can now use those spokes to define a natural embedding of $\Rad_\Ff$ in $\mathbb{S}^2$: take the subgraph of $G$ consisting of all branching points, all objects of $\Ff$, and all spokes of $\Diag_\Ff$, contract the branching points and the objects of $\Ff$ to single vertices, and contract the spokes of $\Diag_\Ff$ to single edges.

If $P$ is a spoke induced in $\Diag_\Ff$, corresponding to an edge $pf$ of $\Rad_\Ff$, and $q\in \Dd$, $q\neq p$, is another object, then we say that $q$ is {\em{in conflict}} with $P$ if there exists a vertex $v$ on $P$ such that $\dist(v,q)<\dist(v,p)$. Note that then we necessarily have $q\notin \Ff$, as $P$ is entirely contained in the cell $\Vor_\Ff(p)$. If $H$ is a subgraph of $\Rad_\Ff$, then we say that $q$ is \emph{in conflict} with $H$ if $q$ is in conflict with any spoke induced by an edge of~$H$.

In the sequel, we will often work with cycles in the radial graph $\Rad_\Ff$. Every such cycle $C$ consists of consecutive edges $p_1f_1,f_1p_2,p_2f_2,f_2p_3,\ldots,p_{\ell}f_\ell,f_\ell p_1$, where $p_1,\ldots,p_\ell$ are objects visited by $C$, $f_1,\ldots,f_\ell$ are branching points visited by $C$, and the length of $C$ is $2\ell$. For such a cycle $C$, we define a closed walk $\Gamma(C)$ in $G$ by concatenating in the natural order (indices behave cyclically):
\begin{itemize}[nosep]
 \item the spokes induced by the edges of $C$;
 \item for every object $p_i$, the unique path within $T(p_i)$ between the endpoints of the spokes induced by $p_if_i$ and $f_{i-1}p_i$ belonging to $p_i$; and
 \item for every branching point $f_i$, the edge of $f_i$ connecting the endpoints of the spokes induced by $p_if_i$ and $f_ip_{i+1}$ belonging to $f_i$.
\end{itemize}
It is easy to observe that $\Gamma(C)$ is indeed a closed walk in $G$. Moreover, the removal of $\Gamma(C)$ from $\mathbb{S}^2$ breaks $\mathbb{S}^2$ into two connected regions that naturally correspond to the two sides of the cycle $C$.

Finally, observe that for every edge $e=fg$ of $\Diag_\Ff$ there is a unique face of $\Rad_\Ff$ that contains $e$; this face is a $4$-cycle $pfqg$, where $p,q$
are objects of $\Ff$ corresponding to the faces of $\Diag_\Ff$ on both sides of~$e$. (Note that possibly $f=g$ or $p=q$.) The face under consideration will be called the {\em{diamond}} of $e$ and denoted $\diamondsuit_\Ff(e)$. Note that thus, the sphere $\mathbb{S}^2$ is divided into diamonds $\{\diamondsuit_\Ff(e)\colon e\in E(\Diag_\Ff)\}$. We will often implicitly identify the diamond $\diamondsuit_\Ff(e)$ with the connected region of $\Gamma(\diamondsuit_\Ff(e))$ that contains $e$.


\paragraph*{Radial separators.} We now introduce the main separator notion that
will be used in this work. Essentially, these separators correspond to the {\em{Voronoi separators}} of Marx and Pilipczuk~\cite{esa15}, but it will be convenient to set up the definitional layer a bit differently.

A \emph{radial separator} $S$ is a pair $(\Pp,C)$, where $\Pp\subseteq \Dd$ is
an independent set of objects and $C$ is a cycle in $\Rad_\Pp$. The \emph{length} of $S$ is the length of $C$, and we denote $\Gamma(S)\coloneqq \Gamma(C)$. 

We say that a radial separator $S = (\Pp,C)$ is \emph{compatible} with an
independent set $\Ff \subseteq \Dd$ if $\Pp \subseteq \Ff$ and each spoke of
$\Diag_\Pp$ that is incident with a branching point on $C$ is not in conflict
with any~$q\in\Ff \setminus \Pp$.  We will use the following claim about
persistence of branching points under addition of non-conflicting objects.

\begin{lemma}\label{lem:persistence}
 Let $\Pp\subseteq\Ff$ be independent sets of objects in $\Dd$. Further, let $f$
 be a branching point in $\Diag_\Pp$ such that none of the three spokes incident
 to $f$ in $\Diag_\Pp$ is in conflict with any of the objects of $\Ff\setminus
 \Pp$. Then $f$ is also a branching point in $\Diag_\Ff$ and all spokes of
 $\Diag_\Pp$ incident with $f$ persist as spokes of $\Diag_\Ff$ incident with
 $f$.
\end{lemma}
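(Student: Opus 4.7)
Let $u_1,u_2,u_3$ denote the three vertices of the triangular face $f$, and for $i\in\{1,2,3\}$ let $P_i$ be the spoke of $\Diag_\Pp$ emanating from $u_i$, so that (by the construction of $\Diag_\Pp$ and the definition of a spoke) $P_i$ is the unique shortest path in $G$ from $u_i$ to some object $p_i\in\Pp$, and $P_i\subseteq\Vor_\Pp(p_i)$. Coincidences among the $p_i$'s are allowed and correspond to singular faces of type~$2$ or~$3$. My plan is to show, in order: (a)~$P_i\subseteq\Vor_\Ff(p_i)$; (b)~$P_i$ is still the unique shortest $u_i$-to-$p_i$ path in $G$ when we work relative to $\Ff$; and (c)~$f$ appears as a branching point of $\Diag_\Ff$, with $P_1,P_2,P_3$ as its incident spokes.

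For (a), take any $v\in P_i$. The membership $v\in\Vor_\Pp(p_i)$ gives $\dist(v,p_i)<\dist(v,p')$ for every $p'\in\Pp\setminus\{p_i\}$; the hypothesis that $P_i$ is not in conflict with any $q\in\Ff\setminus\Pp$ at $v$ gives $\dist(v,p_i)\leq\dist(v,q)$ for every such $q$, which is strict because the perturbation of edge lengths guarantees pairwise distinct distances. Hence $p_i$ is the unique nearest object of $\Ff$ to $v$, so $v\in\Vor_\Ff(p_i)$. Claim (b) is immediate because shortest paths in $G$ depend only on the graph and its edge lengths, not on the set of objects considered; combined with (a), $P_i$ is exactly the spoke associated in $\Diag_\Ff$ with the assignment $u_i\mapsto p_i$, provided (c) holds.

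For (c), I argue that $f$ retains degree~$3$ through the three-step construction of $\Diag_\Ff$. Choose the spanning trees $T(p)$ used to form $\widehat{T}(p)$ consistently for $p\in\Pp$ in both constructions. Since $\Vor_\Ff(p)\subseteq\Vor_\Pp(p)$ for $p\in\Pp$ and shortest paths are unique, $\widehat{T}_\Ff(p)\subseteq\widehat{T}_\Pp(p)$ edge-wise for every $p\in\Pp$. A case analysis on each edge $u_iu_{i+1}$ of $f$, split according to whether its endpoints lie in the same Voronoi cell (a property of $\Pp$ that by (a) also holds in $\Ff$) or in distinct ones, shows that such an edge lies in some $\widehat{T}_\Pp(\cdot)$ iff it lies in some $\widehat{T}_\Ff(\cdot)$: if the endpoints are in different cells the edge is in no tree on either side; if both are in $\Vor_\Ff(p)\subseteq\Vor_\Pp(p)$, the inclusion of trees passes the negation of membership from the $\Pp$-side to the $\Ff$-side, and the edge cannot lie in any $\widehat{T}_\Ff(q)$ for $q\neq p$ because both endpoints are in $\Vor_\Ff(p)$. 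Since $f$ has degree $3$ in $\Diag_\Pp$, none of its three edges lies in any $\widehat{T}_\Pp(\cdot)$, hence none lies in any $\widehat{T}_\Ff(\cdot)$, so $f$ again has degree~$3$ after step~(ii) of the $\Ff$-construction. The three retained dual edges at $f$ are the duals of the three edges of $f$, and (b) identifies the three induced spokes as $P_1,P_2,P_3$.

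The main obstacle I expect is completing step (c) by showing that the iterative cleanup in step~(iii) does not eliminate $f$ through a cascade of degree-$\leq 1$ removals happening elsewhere in the dual graph. The plan is to observe that $f$ lies on a cycle in the multigraph produced by step~(ii), for example on the cycle bounding any of the three diamonds $\diamondsuit_\Ff(e)$ for $e$ one of the three edges of $\Diag_\Pp$ incident to $f$; such a vertex lies in the $2$-core of the multigraph and therefore survives the cleanup. The persistence of these bounding cycles is checked by tracing how the boundary of each $\Vor_\Pp$-cell is refined (but not destroyed) in the Voronoi partition induced by $\Ff$, using (a) and (b) extended along the cycle to argue that the relevant local structure is preserved; this is the technically most delicate point, but it only requires local comparison and does not invoke global properties of $\Diag_\Ff$.
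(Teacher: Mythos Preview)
Your parts (a) and (b), and the argument that none of the three primal edges of $f$ lies in any $\widehat{T}_\Ff(\cdot)$ (so $f$ has degree~$3$ after step~(ii) of the $\Ff$-construction), are correct and essentially match the paper. The genuine gap is exactly where you flag it: surviving step~(iii). Your proposed route through ``the cycle bounding $\diamondsuit_\Ff(e)$'' is circular as written, since diamonds are defined only once $\Diag_\Ff$ exists, and the vaguer plan of ``tracing how the boundary of each $\Vor_\Pp$-cell is refined'' does not yet isolate a concrete cycle through $f$ in the step-(ii) graph.

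The paper closes this gap with a short direct argument that bypasses any $2$-core bookkeeping, and it is worth seeing because it uses precisely the information that made $f$ a branching point of $\Diag_\Pp$ in the first place. For each edge $uv$ of $f$: if $u,v$ lie in different cells of $\Vor_\Pp$ (equivalently of $\Vor_\Ff$, by your (a)), then $(uv)^\star$ is a cell-boundary dual edge, and such edges are never removed in the degree-$1$ pruning (every triangular face incident to a cell-crossing primal edge has at most one tree edge, so its dual degree stays $\geq 2$). If instead $u,v$ lie in the same cell $\Vor_\Pp(p)$, then the cycle $L$ formed by $uv$ together with the $u$--$v$ path in $\widehat{T}(p)$ must separate two objects of~$\Pp$ --- this is exactly the separation condition in the definition of singular faces of type~$2$ or~$3$, and it is what forces $(uv)^\star$ to survive in $\Diag_\Pp$. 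Since $\Pp\subseteq\Ff$, the same cycle $L$ separates two objects of $\Ff$; and by your (a) and (b) the $u$--$v$ path in $\widehat{T}(p)$ is still contained in $\widehat{T}_\Ff(p)$, so $(uv)^\star$ is the \emph{only} edge of the step-(ii) graph for $\Ff$ whose primal lies on $L$. A dual edge that is the unique crossing of a Jordan curve separating two faces of a connected plane graph cannot be deleted by iteratively removing degree-$1$ vertices. Applying this to all three edges of $f$ gives degree~$3$ for $f$ after step~(iii), and you are done. This replaces your diamond sketch with a two-line topological observation.
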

\begin{proof}
 Since the spokes of $\Diag_\Pp$ incident with $f$ are not in conflict with any
 object of $\Ff\setminus \Pp$, they persist as shortest paths within the cells
 of the same objects in $\Vor_\Ff$, as compared to $\Vor_\Pp$. In particular,
 every vertex of $f$ is in the cell of the same object in $\Vor_\Ff$ as in
 $\Vor_\Pp$. Now, consider any two distinct vertices $u,v$ of $f$. If $u,v$
 belong to different cells in $\Vor_\Pp$, equivalently in $\Vor_\Ff$, then the
 dual of the edge $uv$ is for sure not removed in the iterative removal of
 vertices of degree $1$ in the construction of $\Diag_\Ff$, because in this
 process we remove only duals of edges connecting vertices from the same cell.
 On the other hand, if $u,v$ belong to the same cell in $\Diag_\Pp$, say of
 object $p$, then the cycle $L$ obtained by concatenating the edge $uv$ of $f$
 with the $u$-$v$ path in $\widehat{T}(p)$ must separate some two objects
 of~$\Pp$, hence it separates some two objects of $\Ff$ as well due to
 $\Ff\supseteq \Pp$. In the construction of $\Vor_\Ff$, again $uv$ is the only
 edge on the cycle $L$ that does not belong to $\widehat{T}(p)$. As $L$
 separates two objects of $\Ff$, it is easy to see that in the iterative removal
 of vertices of degree $1$ in the construction of $\Diag_\Ff$, the dual of the
 edge $uv$ also could not be removed. All in all, none of the duals of the edges
 of the face $f$ was removed in the construction of $\Diag_\Ff$, which means
 that $f$ remains a vertex of degree $3$ in $\Diag_\Ff$, hence a branching
 point. Then the spokes incident to $f$ in $\Diag_\Pp$ also remain spokes
 incident to $f$ in $\Diag_\Ff$, because the vertices of $f$ are in the cells of
 the same objects in $\Vor_\Ff$ as in $\Vor_\Pp$.
\end{proof}

We first verify that extending the family of objects in a radial separator to a compatible superset still yields a radial separator. This is because, by definition, no spoke defining the separator can cease to be a spoke when adding the additional objects.

\begin{lemma}\label{lem:radial-sep-comps}
    Let $S = (\Pp,C)$ be a radial separator and suppose
    $S$ is compatible with $\Ff \subseteq \Dd$. Then 
    each branching point on $C$ is also a branching point on $\Diag_\Ff$.
    Moreover, $C$ is also a cycle in $\Rad_\Ff$, which means that $(\Ff,C)$ is
    also a radial separator.
\end{lemma}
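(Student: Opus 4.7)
The plan is to reduce everything to \cref{lem:persistence}. Since compatibility of $S=(\Pp,C)$ with $\Ff$ exactly asserts that $\Pp\subseteq \Ff$ and that for every branching point $f$ appearing on $C$, none of the three spokes of $\Diag_\Pp$ incident with $f$ is in conflict with any object of $\Ff\setminus\Pp$, the hypothesis of \cref{lem:persistence} is satisfied verbatim at each branching point on $C$. Applying it, I conclude immediately that every branching point $f$ of $C$ survives as a branching point in $\Diag_\Ff$, and that each of the three spokes incident to $f$ in $\Diag_\Pp$ persists as a spoke of $\Diag_\Ff$ incident to the same $f$. This handles the first assertion.

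For the second assertion I will read off the cycle $C$ edge by edge. Write $C$ as $p_1 f_1 p_2 f_2 \cdots p_\ell f_\ell p_1$, and recall that each edge $p_i f_i$ (resp.\ $f_i p_{i+1}$) of $\Rad_\Pp$ is labelled by some vertex $u\in f_i$ such that $u\in \Vor_\Pp(p_i)$ (resp.\ $u\in \Vor_\Pp(p_{i+1})$), and represents the spoke from $u$ to the nearest vertex of $p_i$ (resp.\ $p_{i+1}$). By the first part, these spokes persist in $\Diag_\Ff$, which already forces $u\in \Vor_\Ff(p_i)$ (resp.\ $\Vor_\Ff(p_{i+1}))$, because the endpoint of the (unique) shortest path from $u$ to $\Ff$ must be the same object as the endpoint of the persisting spoke. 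Hence exactly the same pair of an object of $\Ff$ and a branching point, with the same vertex label, arises as an edge of $\Rad_\Ff$; in other words, every edge of $C$ is also an edge of $\Rad_\Ff$. Since $C$ visits the same branching points and the same objects (which all lie in $\Ff\supseteq\Pp$) in the same cyclic order, $C$ is a cycle of $\Rad_\Ff$, so $(\Ff,C)$ is a radial separator as required.

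I do not foresee a real obstacle: the lemma is essentially a bookkeeping consequence of \cref{lem:persistence}. The only point that warrants any care is verifying that the vertex labels of the edges of $C$, interpreted in $\Rad_\Ff$ rather than in $\Rad_\Pp$, still point to the same object; but this follows from persistence of spokes, since a spoke determines the Voronoi cell of its endpoint, and from the fact that the branching points of $C$ remain trivalent vertices of $\Diag_\Ff$, which guarantees that the edge of $\Rad_\Ff$ incident to $f$ and labelled by $u$ is unambiguously defined.
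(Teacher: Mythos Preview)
Your proposal is correct and follows essentially the same approach as the paper: both proofs reduce directly to \cref{lem:persistence}, apply it at each branching point of $C$ to obtain persistence of branching points and their incident spokes in $\Diag_\Ff$, and then observe that this forces every edge of $C$ to persist as an edge of $\Rad_\Ff$. Your version is slightly more explicit about tracking the vertex labels, but the argument is the same.
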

\begin{proof}
    Let $K$ be the set of spokes induced in $\Diag_\Pp$ that are incident with the branching points visited by~$C$. Since $S$ is compatible with $\Ff$, it holds that $\Pp \subseteq
    \Ff$ and every spoke in $K$ is not in conflict with any object of $\Ff \setminus \Pp$. 
    By \cref{lem:persistence}, all branching points traversed by $C$ persist as branching points in $\Diag_\Ff$, and all spokes of $K$ persist as spokes connecting the same objects and branching points in $\Diag_\Ff$. The latter in particular applies to all spokes corresponding to the edges of $C$.
    We conclude that all objects, branching points, and edges between them that are traversed by $C$ in $\Rad_\Pp$ persist intact in $\Rad_\Ff$. So $C$ is also a cycle in $\Rad_\Ff$ and $(\Ff,C)$ is a radial separator.    
\end{proof}

Finally, we show that for the purpose of defining a radial separator $S$, one
can always restrict attention to a small set of objects relevant to the branching points visited by $S$.

\begin{lemma}\label{lem:radial-reduce-size}
    Let $(\Pp,C)$ be a radial separator of length $\ell$.  Then there exists $\Pp'\subseteq \Pp$ such that $|\Pp'|\leq \frac{5}{2} \ell$, $C$ is also a cycle in the radial graph $\Rad_{\Pp'}$, and $(\Pp',C)$ is a radial separator compatible with~$\Pp$.
\end{lemma}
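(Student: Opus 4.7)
The plan is to construct $\Pp'$ by collecting, for each branching point $g$ traversed by $C$, the small set of objects that witness $g$ as a singular face in $\Diag_\Pp$, and then to verify the three required conclusions: the size bound, that $C$ persists as a cycle in $\Rad_{\Pp'}$, and compatibility with $\Pp$.

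Let $\ell$ denote the length of $C$, so $C$ traverses $\ell/2$ objects and $\ell/2$ branching points. For each branching point $g$ on $C$, I will invoke \cref{lem:types-are-enough} to obtain a set $R_g \subseteq \Pp$ of at most $4$ objects (a triple for types $1$ and $2$, a quadruple for type $3$) for which $g$ is a singular face. The two objects of $C$ incident with $g$ automatically lie in $R_g$, because they own two of the vertices of $g$ in $\Vor_\Pp$. Then I set $\Pp' \coloneqq \bigcup_g R_g$: this gives $\Pp' \subseteq \Pp$, contains all objects of $C$, and has size at most $4 \cdot (\ell/2) = 2\ell \leq \tfrac{5}{2}\ell$.

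The heart of the argument is verifying that each $g$ on $C$ remains a branching point of $\Diag_{\Pp'}$ with the same three spokes. For this I will combine two observations. First, because distances in $G$ are pairwise distinct and $\Pp' \subseteq \Pp$, if a vertex $u$ of $g$ lies in $\Vor_\Pp(p)$ with $p \in R_g \subseteq \Pp'$, then it also lies in $\Vor_{\Pp'}(p)$; so the Voronoi ownership of each vertex of $g$ does not change. Second, the subpaths and subtrees of $\widehat{T}(p)$ that appear in the three singular-face definitions are built only from the fixed spanning tree $T(p)$ and from shortest paths in $G$, so they are identical whether computed in the $\Pp$-context or in the $\Pp'$-context. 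Since the topological separation conditions of types $2$ and $3$ are conditions on $\mathbb{S}^2$ about objects of $R_g \subseteq \Pp'$, they transfer verbatim. Consequently $g$ is a singular face of the same type for $R_g$ in $\Diag_{\Pp'}$, and its three spokes are identical shortest paths in $G$. I expect the main obstacle to be the step from ``singular face'' back to ``branching point'' in $\Diag_{\Pp'}$: one must argue that a face satisfying a type-$1$, $2$, or $3$ condition indeed survives the degree-$1$ removal and degree-$2$ bypass operations in step~(iii) of the diagram construction. This is essentially a converse to \cref{lem:types-are-enough} and is implicit in the structural analysis of Marx and Pilipczuk~\cite{esa15}; I will cite it rather than redo it.

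Once $C$ is known to be a cycle in $\Rad_{\Pp'}$, so that $(\Pp', C)$ is a radial separator, compatibility with $\Pp$ follows easily: each spoke of $\Diag_{\Pp'}$ incident with a branching point on $C$ is a shortest path in $G$ ending at some $p \in \Pp'$ and is contained in $\Vor_\Pp(p)$, so every vertex $v$ on it satisfies $\dist(v,p) < \dist(v,q)$ for all $q \in \Pp \setminus \{p\}$, in particular for all $q \in \Pp \setminus \Pp'$. This rules out any conflict and completes the proposal.
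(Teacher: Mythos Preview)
Your proposal is correct and follows essentially the same route as the paper: define $\Pp'$ as the union over branching points $g$ on $C$ of the triple/quadruple $R_g$ witnessing $g$ as a singular face, then argue that each $g$ and its incident spokes persist in $\Diag_{\Pp'}$, and finally deduce compatibility from $\Pp'\subseteq\Pp$. The only differences are cosmetic: the paper redundantly adds the $\ell/2$ objects on $C$ (yielding $\tfrac{5}{2}\ell$ rather than your $2\ell$), and it factors the persistence step through $\Diag_{R_g}$ together with \cref{lem:persistence} instead of arguing directly as you do.
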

\begin{proof}
    Let $\Pp'$ consist of:
    \begin{itemize}[nosep]
     \item the $\ell/2$ objects traversed by $C$; and
     \item for each of the $\ell/2$ branching points $f$ traversed by $C$, the set $\Pp_f\subseteq \Pp$ consisting of three or four objects of $\Pp$ that define $f$ in the sense of \cref{lem:types-are-enough}.
    \end{itemize}
    Thus, $|\Pp'|\leq \frac{5}{2}\ell$. 
    
    Consider any branching point $f$ of $\Diag_\Pp$ traversed by $C$.
    By the property provided by \cref{lem:types-are-enough}, $f$ is also a branching point in $\Diag_{\Pp_f}$, and the construction in the proof of \cref{lem:types-are-enough} in \cite{esa15} actually shows that the spokes of $\Diag_{\Pp_f}$ incident to $f$ are the same as the spokes of $\Diag_{\Pp}$ incident to $f$. Since $\Pp'\supseteq \Pp_f$ and none of the objects of $\Pp'\setminus \Pp_f$ is in conflict with any of those spokes (because they are spokes in $\Diag_{\Pp}$ and $\Pp'\subseteq \Pp$), by \cref{lem:persistence} we conclude that $f$ remains a branching point in $\Diag_{\Pp'}$, and all spokes of $\Diag_\Pp$ incident with $f$ are also spokes of $\Diag_{\Pp'}$ incident with $f$. This in particular applies to the two spokes corresponding to the edges of $C$ incident with $f$.

    Hence, all objects, branching points, and edges between them that are traversed by $C$ in $\Rad_\Pp$ persist intact in $\Rad_{\Pp'}$, and $C$ is also a cycle in $\Rad_{\Pp'}$. Finally, that all objects of $\Pp-\Pp'$ are not in conflict with the spokes of $\Diag_{\Pp'}$ incident with the branching points traversed by $C$ follows immediately from the construction, as these spokes are entirely contained in $\bigcup_{p\in \Pp'} \Vor_{\Pp}(p)$. Hence $(\Pp',C)$ is a radial separator compatible with $(\Pp,C)$.
\end{proof}

\section{Swiss-cheese separators}\label{sec:separator}

The building blocks of our dynamic programming are captured through the
following~notion. Here, two cycles in $\mathbb{S}^2$ are {\em{non-crossing}} if
there exists a homotopy that deforms one cycle into the other without any
intersections.

\begin{figure}[ht!]
    \centering
    \includegraphics[width=0.6\textwidth]{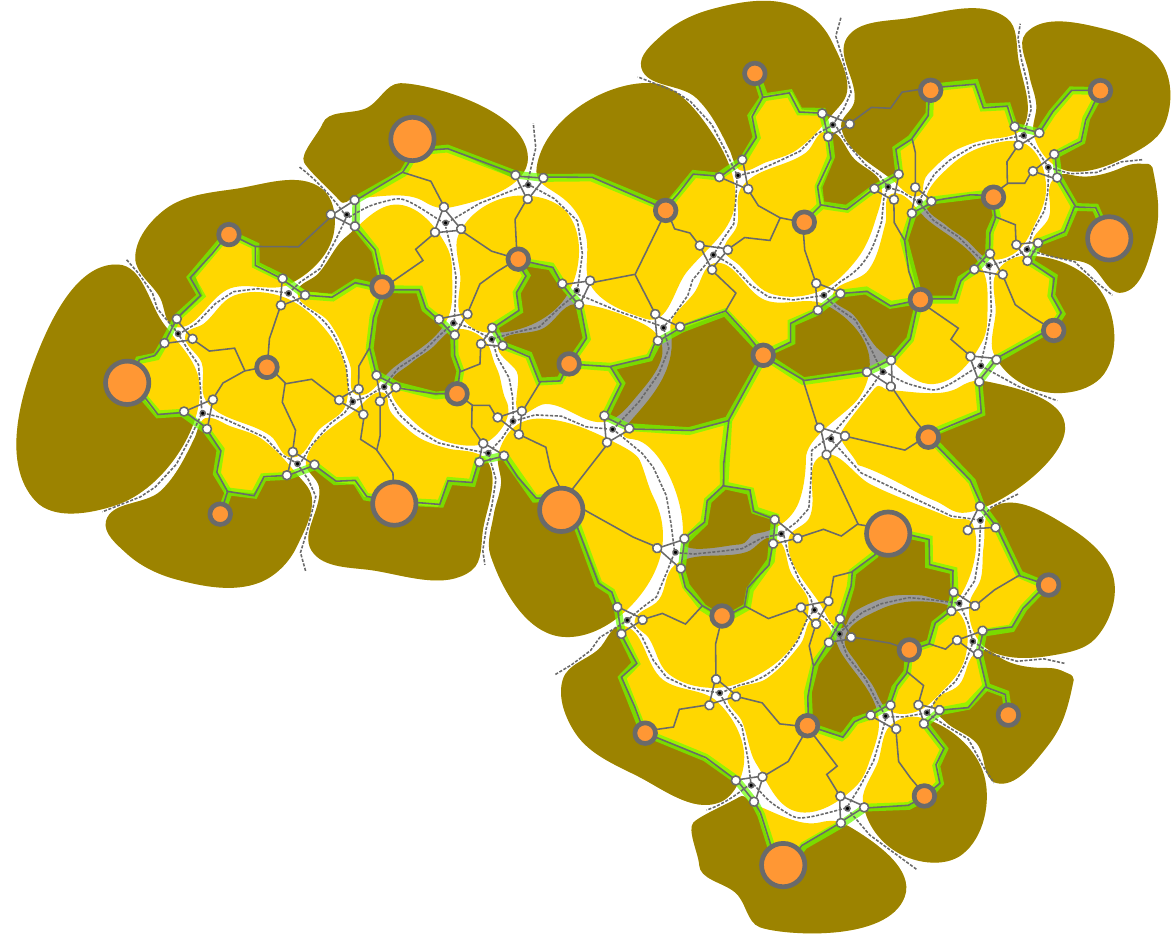}
    \caption{A schematic overview of a Swiss-cheese separator. The orange
    circles represent objects of $\Pp$. The triangular faces are the branching points of $\Diag_\Ff$. The yellow area (cheese mass) is the face $f$, while green parts (holes) are the regions excluded by the cycles of $\Kk$. The figure is built upon~\cite[Figure~1]{Cohen-AddadPP19}.}
    \label{fig:cheese}
\end{figure}

\begin{definition}[Swiss-cheese Separator] \label{def:our-separator}
    A \emph{Swiss-cheese separator} is a triple $(\Pp,\Kk,f)$ where $\Pp
    \subseteq \Dd$ is a set of pairwise disjoint objects, $\Kk$ is a set of
    edge-disjoint non-crossing cycles in $\Diag_\Pp$,
    and $f$ is a face of $\bigcup_{C\in \Kk} C$ (i.e., the subgraph of $\Diag_\Pp$ consisting of the union of all the vertices and edges of all the cycles in $\Kk$) whose boundary contains every cycle of $\Kk$.
    The {\em{complexity}} of the Swiss-cheese separator $(\Pp,\Kk,f)$ is
    the total number of edges of cycles in $\Kk$. A Swiss-cheese separator is {\em{ripe}} if $|\Pp|\leq \frac{5}{2}\ell$, where $\ell$ is its complexity.
\end{definition}

In the context of a Swiss-cheese separator $(\Pp,\Kk,C)$, we will often identify
the face $f$ with the corresponding region of $\mathbb{S}^2\setminus
\bigcup_{C\in \Kk}\Gamma(C)$, with the natural meaning (see~\cref{fig:cheese}).

Note that if $(\Pp,\Kk,C)$ is a Swiss-cheese separator, then 
for every $C \in \Kk$, the pair $(\Pp,C)$ is a radial separator.
The next definition is adapted from~\cite{esa18}.  For a  $C \in \Kk$, we say that an object $p \in \Dd$ is
{\em{banned}} by $C$ if $p$ is in conflict with any spoke of $\Diag_\Pp$
incident to any branching point traversed by $C$. We define 
\begin{displaymath}
    \Allowed(\Pp,\Kk,f)\coloneqq \left\{\,p\in \Dd\quad\Big|\quad p\subseteq
    f,\ p \textrm{ is not banned by any }C\in \Kk \textrm{ and }\Pp\cup \{p\}\textrm{ is independent} \right\}.
\end{displaymath}
In other words, $\Allowed(\Pp,\Kk,f)$ comprises all objects
entirely contained in the region $f$ that are not banned by the radial
separators $(\Pp,C)$, for any $C \in \Kk$, and are disjoint with the objects of $\Pp$. The point of this definition is that for any $C\in \Kk$ and $p\in \Allowed(\Pp,\Kk,f)$, $p$ is not in conflict with $C$.

Now, let us introduce the following \emph{size} parameter:
$$s \coloneqq s(\eps) = \frac{10^{10} \ln^2(1/\eps)}{\eps^2}.$$ 
The main goal of this section is to prove the following separator theorem.

\begin{theorem}[Swiss-cheese Separator Theorem]\label{lem:nice-separator}
    Let $\eps>0$ be a small enough parameter and let 
    $(\Pp, \Kk,f)$ be a ripe Swiss-cheese separator of complexity at most
    $s(\eps)$. Let $\Ff \subseteq \Allowed(\Pp,\Kk,f)$ be a set of pairwise disjoint
    objects such that $|\Ff| \ge s(\eps)$. Then there exists a set of disjoint objects $\Ss$ with $\Pp\subseteq \Ss\subseteq \Pp\cup \Ff$ and $|\Ss\setminus\Pp|\leq 2s(\eps)$ and two cycles $C_1,C_2$ in $\Rad_\Ss$ such that  
    \begin{enumerate}[label=\textbf{(\alph*)},nosep]
    \item at most $\eps |\Ff|$ objects of $\Ff$ are in conflict with $C_1$ or $C_2$.\label{enum:conflict}
    \end{enumerate}
    Further, let $R$ be a subgraph of $\Rad_{\Ss}$ consisting of all the edges and vertices of cycles corresponding to the cycles of $\Kk \cup \{C_1,C_2\}$ (where cycles of $\Kk$ are also cycles in $\Rad_\Ss$ by  \cref{lem:radial-sep-comps}). Then
    \begin{enumerate}[label=\textbf{(\alph*)},nosep,resume]
        \item for every face $g$ of $R$ satisfying $g\subseteq f$, the region of $\mathbb{S}^2-\bigcup_{C\in \Kk\cup \{C_1,C_2\}} \Gamma(C)$ corresponding to $g$ contains at most $\frac{3}{4} |\Ff|$ objects of $\Ff$, and \label{enum:constant-frac}
        \item the boundary of every face $g$ of $R$ satisfying $g\subseteq f$
        can be partitioned into edge-disjoint non-crossing cycles in $\Rad_{\Ss}$ of
        total length at most $s(\eps)$, and therefore $g$ can be represented as some ripe Swiss-cheese
        separator $(\Pp_g,\Kk_g,g)$ of complexity at most $s(\eps)$.\label{enum:partition}
    \end{enumerate}
\end{theorem}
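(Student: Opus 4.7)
My plan is to construct $C_1$ and $C_2$ via two applications of the Light Separator Lemma (\cref{lem:light-sep-lemma}): $C_1$ is chosen to balance the objects in $\Ff$, and $C_2$ is chosen to balance the total edge-length of the cycles in $\Kk$ (viewed as a planar weight distribution inside $f$). These two balance properties together will yield conditions \cref{enum:constant-frac} and \cref{enum:partition}, while condition \cref{enum:conflict} will follow because each cycle individually is in conflict with only a small fraction of $\Ff$.

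I first set $\delta \coloneqq \eps/2$ so that \cref{lem:light-sep-lemma} yields cycles of length at most $\poly(1/\eps)$, which is negligible compared to $s(\eps) = \Theta(\ln^2(1/\eps)/\eps^2)$. Applying \cref{lem:light-sep-lemma} to $\Ff$ with accuracy $\delta$, I obtain a radial separator $(\Pp_1,C_1)$ with $\Pp_1 \subseteq \Ff$, $|C_1| \leq \poly(1/\eps)$, $|\Pp_1| \leq \frac{5}{2}|C_1|$ (via \cref{lem:radial-reduce-size}), at most $\delta |\Ff|$ objects of $\Ff$ in conflict with $C_1$, and $C_1$ splitting $\Ff$ into parts each of size at most $\frac{2}{3}|\Ff|$. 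I then invoke a weighted analogue of \cref{lem:light-sep-lemma} in which the balancing measure is the set of edges of $\Kk$ rather than the objects of $\Ff$, obtaining a second radial separator $(\Pp_2,C_2)$ with the analogous length and conflict bounds, and additionally ensuring that each side of $C_2$ contains at most $\frac{2}{3}$ of the edges of $\Kk$. Setting $\Ss \coloneqq \Pp \cup \Pp_1 \cup \Pp_2$ gives $|\Ss \setminus \Pp| \leq \frac{5}{2}(|C_1|+|C_2|) \leq 2 s(\eps)$, and by \cref{lem:radial-sep-comps} both $C_1$ and $C_2$ persist as cycles in $\Rad_\Ss$, together with every cycle of $\Kk$.

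To verify the three conditions: \cref{enum:conflict} follows since the total number of $\Ff$-objects in conflict with either $C_1$ or $C_2$ is at most $2\delta|\Ff|=\eps|\Ff|$. For \cref{enum:constant-frac}, the balance provided by $C_1$ alone ensures that every face $g \subseteq f$ of $R$ lies on a single side of $C_1$, hence contains at most $\frac{2}{3}|\Ff| \leq \frac{3}{4}|\Ff|$ objects of $\Ff$. For \cref{enum:partition}, the boundary of $g$ decomposes into a portion of $\Kk$ of total length at most $\frac{2}{3}$ of the complexity of $\Kk$ (hence at most $\frac{2}{3}s(\eps)$, by the balancing property of $C_2$) together with at most $|C_1|+|C_2| \leq \frac{1}{3}s(\eps)$ additional edges from $C_1$ and $C_2$; the total is at most $s(\eps)$. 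Partitioning the boundary into edge-disjoint non-crossing cycles of $\Rad_\Ss$ then follows by tracing the face boundary walk and extracting elementary cycles, which exhibits $g$ as a ripe Swiss-cheese separator.

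The principal obstacle is the second step: proving the existence of $C_2$ that balances an arbitrary planar weight distribution supported on edges of $\Kk$, rather than a uniform weight on objects of $\Ff$. I expect this to require extending the proof of \cref{lem:light-sep-lemma} from~\cite{esa15,esa18} to a weighted setting and then carefully integrating that extension with the Voronoi-based radial-separator machinery of \cref{sec:basic-toolbox}, ensuring in particular that the compatibility conditions of \cref{lem:radial-sep-comps} are preserved when the two separators are installed on top of the pre-existing structure $(\Pp,\Kk)$. The bulk of the technical work is thus naturally deferred to \cref{sec:proof-separator}.
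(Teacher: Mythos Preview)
Your high-level plan --- find $C_1$ to balance $\Ff$ and $C_2$ to balance the edges of $\Kk$ --- matches the paper's strategy, but the execution has a genuine gap that the paper's proof avoids in an essential way.

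The central problem is compatibility. You obtain $(\Pp_1,C_1)$ from the Light Separator Lemma applied to $\Ff$, with $\Pp_1\subseteq\Ff$; then $(\Pp_2,C_2)$ from a second, independent application. You set $\Ss=\Pp\cup\Pp_1\cup\Pp_2$ and invoke \cref{lem:radial-sep-comps} to claim that $C_1,C_2$ persist as cycles in $\Rad_\Ss$. But \cref{lem:radial-sep-comps} requires that the spokes of $C_1$ in $\Diag_{\Pp_1}$ are not in conflict with any object of $\Ss\setminus\Pp_1$ --- in particular with the objects of $\Pp$ (which sit on the boundary of $f$ and may well intrude into the Voronoi cells of $\Pp_1$) and of $\Pp_2$. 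Nothing in your construction guarantees this; $C_1$ was built with no knowledge of $\Pp$ or $\Pp_2$. Symmetrically for $C_2$. So the cycles you construct may simply fail to exist in $\Rad_\Ss$. A second, related gap is hidden in your $C_2$: you need it to balance the $\Kk$-edges \emph{and} to be in conflict with at most $\delta|\Ff|$ objects of $\Ff$, but a weighted Light Separator Lemma would only make the cycle light with respect to the same measure it balances, not with respect to an unrelated one.

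The paper resolves both issues by reversing the order of operations. It first fixes a single $\Ss$ with $\Pp\subseteq\Ss\subseteq\Pp\cup\Ff$ by random sampling (each $p\in\Ff$ included with probability $s/|\Ff|$), and shows that with positive probability $|\Ss\setminus\Pp|\in[s/2,2s]$, no spoke of $\Diag_\Ss$ is in conflict with more than $\eta=\Theta(\ln(1/\eps)\,|\Ff|/s)$ objects of $\Ff$, and no diamond of $\Diag_\Ss$ is heavy. Only then does it find $C_1$ and $C_2$ directly in $\Rad_\Ss$, via a vertex-weighted balanced-cycle theorem (\cref{thm:separator}) applied with two different weight functions on $V(\Diag_\Ss)$: one encoding the density of $\Ff$ (for $C_1$), the other marking the branching points lying on cycles of $\Kk$ (for $C_2$). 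Because both cycles live in $\Rad_\Ss$ from the outset, there is no compatibility issue at all; and because every spoke of $\Diag_\Ss$ is light, \emph{any} short cycle in $\Rad_\Ss$ --- in particular both $C_1$ and $C_2$, regardless of what they were balanced against --- is automatically in conflict with at most $O(\sqrt{s})\cdot\eta\le\eps|\Ff|$ objects, which yields \cref{enum:conflict}. The resulting length bound $|C_i|\le\sqrt{162s}=O(\sqrt{s})$ is also what makes the arithmetic for \cref{enum:partition} close; your unquantified $\poly(1/\eps)$ would need to be at most $s/3$, which is not automatic from \cref{lem:light-sep-lemma}.
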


The rest of this section is dedicated to the proof of~\cref{lem:nice-separator}. 
Therefore, from now on we fix the ripe Swiss-cheese separator $(\Pp,\Kk,f)$ and the family $\Ff\subseteq \Allowed(\Pp,\Kk,f)$.
Let $\ell$ be the complexity of $(\Pp,\Kk,f)$; we have $\ell\leq s\leq |\Ff|$. Also, recall that
$\Pp \subseteq \Dd$ and, as $(\Pp,\Kk,f)$ is ripe, also $|\Pp|\le \frac{5}{2}\ell\leq \frac{5}{2}s$. We will assume that $\eps<10^{-100}$, so that we also have $s>10^{10}$.

\subsection{Sampling}\label{sec:sampling}

The first step is to construct $\Ss$.
Similarly to~\cite{adamaszek19,Har-Peled14,esa18}, we do this by sampling objects from $\Ff$. More precisely, first we include in $\Ss$ every member of $\Pp$.
Then, for every object $p \in \Ff$, we include $p$ in $\Ss$ independently with
probability $\lambda \coloneqq \frac{s}{|\Ff|}$. Note that $\lambda \leq 1$, as
we assumed that $|\Ff| \ge s$.  This concludes the construction of $\Ss$. 
Our first observation is that with high probability, the size of $\Ss\setminus
\Pp$ is $\Theta(s)$. The proof is a standard application of Chernoff's inequality.

\begin{claim}\label{lem:sample-size}
    With probability at least $\frac{3}{4}$ it holds that $s/2 \le |\Ss
    \setminus \Pp| \le 2s$.
\end{claim}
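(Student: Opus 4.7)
The plan is a textbook application of Chernoff's inequality to the random variable $X \coloneqq |\Ss \setminus \Pp|$. First, I would observe that the two sets $\Pp$ and $\Ff$ are disjoint: every $p \in \Ff \subseteq \Allowed(\Pp,\Kk,f)$ satisfies that $\Pp \cup \{p\}$ is independent, so in particular $p$ is disjoint from every object of $\Pp$, and hence $p \notin \Pp$. Thus the only contributions to $\Ss \setminus \Pp$ come from the independent Bernoulli choices made for objects of $\Ff$, meaning $X$ is a sum of $|\Ff|$ independent Bernoulli$(\lambda)$ variables with $\lambda = s/|\Ff|$, so $\Ex{X} = \lambda \cdot |\Ff| = s$.

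Next, I would apply the standard multiplicative Chernoff bounds. With $\mu = s$, these give
\[
\Prb[X \le s/2] = \Prb[X \le (1-\tfrac{1}{2})\mu] \le \exp(-\mu/8) = \exp(-s/8),
\]
\[
\Prb[X \ge 2s] = \Prb[X \ge (1+1)\mu] \le \exp(-\mu/3) = \exp(-s/3).
\]
A union bound then shows that the event $s/2 \le X \le 2s$ fails with probability at most $\exp(-s/8) + \exp(-s/3) \le 2\exp(-s/8)$.

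Finally, I would plug in the assumption $s = s(\eps) \ge 10^{10}$ from the setup preceding the claim (guaranteed by $\eps < 10^{-100}$), which makes $2\exp(-s/8)$ astronomically smaller than $1/4$, completing the bound of $3/4$. The only point worth a moment's care is the disjointness $\Pp \cap \Ff = \emptyset$, since without it one would have to subtract a deterministic shift before applying Chernoff. Everything else is a routine calculation with no real obstacle.
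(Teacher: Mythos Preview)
Your proposal is correct and essentially identical to the paper's proof: both write $X=|\Ss\setminus\Pp|$ as a sum of $|\Ff|$ independent Bernoulli$(\lambda)$ variables with mean $s$, apply the multiplicative Chernoff bounds to get $e^{-s/8}$ and $e^{-s/3}$ for the two tails, and finish using $s>100$ (the paper) or $s>10^{10}$ (you). Your explicit verification that $\Pp\cap\Ff=\emptyset$ is a nice clarification that the paper leaves implicit.
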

\begin{proof}
    Let $X$ be the random variable equal to the cardinality of $\Ss \setminus
    \Pp$. Hence $X = \sum_{p \in \Ff} X_p$, where
    $X_p$ are random variables that take value $1$ if $p \in \Ff$ is included in $\Ss$ and $0$ otherwise. Note that $X_p$ are independent identically distributed Bernoulli variables with success probability $\lambda$, and $$\Ex{X} = \sum_{p \in \Ff} \Ex{X_p} =
    |\Ff| \cdot \lambda  = s.$$
    Hence, by Chernoff inequality we conclude that
    $$\prob{X<s/2}{}\leq e^{-s/8}\qquad\textrm{and}\qquad \prob{X>2s}{}\leq e^{-s/3}.$$
    As we assumed that $s>100$, the probability that any of the above events happens is bounded by $\frac{1}{4}$.
\end{proof}


Next, we analyze the properties of the set $\Ss$. This will generally follow the
framework of~\cite[Section~2.2]{esa18} with the caveat that the objects of $\Pp$
are always included in $\Ss$. We start by arguing that with high probability,
every spoke induced in $\Diag_\Ss$ is in conflict with only a few objects in $\Ff$. To measure this, we need the following definition: we say that a spoke of $\Diag_\Ss$ is \emph{heavy} if it is in conflict with at least
    $$\eta \coloneqq \frac{100 \ln(1/\eps)}{s} \cdot |\Ff|\qquad \textrm{objects in }\Ff.$$
The following statement guarantees that with high probability no spoke is heavy.

\begin{lemma}[\GoToAppendix]\label{lem:no-heavy-spoke}
    With probability at least $\frac{3}{4}$, all spokes induced in the diagram $\Diag_\Ss$ are not heavy.
\end{lemma}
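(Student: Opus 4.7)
The plan is to apply a union bound over all \emph{candidate spokes} that could possibly arise in $\Diag_\Ss$. By \cref{lem:types-are-enough}, every branching point of $\Diag_\Ss$ is a singular face of type $1$, $2$, or $3$ for some tuple of $3$ or $4$ objects contained in $\Ss$, and by \cref{lem:few-singular-faces} each such tuple determines $\Oh(1)$ singular faces, hence $\Oh(1)$ candidate spokes. So I would enumerate candidate spokes by enumerating tuples $\Pp_\sigma \subseteq \Pp \cup \Ff$ of size at most~$4$; each such tuple, together with the geometric data of the corresponding singular face, determines a finite collection of candidate spokes $\sigma$. For each candidate $\sigma$, let $Q_\sigma$ denote the set of objects in $\Ff\setminus \Pp_\sigma$ that are in conflict with $\sigma$ when $\sigma$ is viewed as a shortest path in $G$.

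The key observation is this: in order for $\sigma$ to be an actual spoke of $\Diag_\Ss$ \emph{and} be heavy, two events must occur. First, all objects of $\Pp_\sigma\cap \Ff$ must be sampled (the objects of $\Pp_\sigma\cap \Pp$ are in $\Ss$ for free). Second, no object of $Q_\sigma$ can be sampled, since sampling any $q\in Q_\sigma$ puts into $\Ss$ an object closer to some vertex on $\sigma$ than the owning endpoint, which destroys the spoke. Conditional on $\sigma$ being heavy we have $|Q_\sigma| \ge \eta$, so these two events are independent with joint probability at most
\[
\lambda^{|\Pp_\sigma\cap\Ff|}\cdot (1-\lambda)^\eta \;\leq\; \lambda^{|\Pp_\sigma\cap \Ff|}\cdot e^{-\lambda \eta}.
\]
A direct computation using $\lambda=s/|\Ff|$ and $\eta=(100\ln(1/\eps)/s)\cdot|\Ff|$ gives $\lambda\eta = 100\ln(1/\eps)$, hence $e^{-\lambda\eta}=\eps^{100}$.

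Then I would finish by bounding the number of candidate tuples and summing. For each $k\in\{0,1,\ldots,4\}$, the number of tuples $\Pp_\sigma$ with $|\Pp_\sigma\cap \Ff|=k$ and $|\Pp_\sigma\cap \Pp|=|\Pp_\sigma|-k$ is at most $\binom{|\Pp|}{\leq 4-k}\binom{|\Ff|}{k} = \Oh(s^{4-k}\cdot|\Ff|^{k})$, using $|\Pp|\leq \tfrac{5}{2}s$ from ripeness. Multiplying by $\lambda^k=\bigl(s/|\Ff|\bigr)^k$ and by the $\Oh(1)$ spokes per tuple, the $|\Ff|^k$ factor cancels and the contribution of each $k$ is $\Oh(s^4)\cdot\eps^{100}$. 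Summing over $k$ and plugging in $s=10^{10}\ln^2(1/\eps)/\eps^2$ yields a total probability bounded by $\Oh(\ln^8(1/\eps)\cdot \eps^{92})$, which is well below $1/4$ for $\eps<10^{-100}$.

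The only subtle point I anticipate is being precise about which events force the candidate spoke to disappear from $\Diag_\Ss$: a priori, an object sampled into $\Ss$ could reshape the Voronoi partition in ways that do not directly involve conflict with $\sigma$. This is handled by noting that it is enough to bound the probability of $\sigma$ surviving as a spoke, and surviving requires non-conflict with every object of $\Ss\setminus \Pp_\sigma$, which is a monotone condition, so conflict with even one sampled object of $Q_\sigma$ already kills $\sigma$; conversely, the conflict relation we use depends only on the path $\sigma$ in $G$ and not on $\Ss$, which lets us freeze $\sigma$ before sampling and apply independence cleanly.
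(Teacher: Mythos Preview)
Your proposal is correct and follows essentially the same approach as the paper: enumerate candidate spokes via singular faces of types $1$--$3$ determined by tuples of $3$ or $4$ objects from $\Pp\cup\Ff$, observe that a heavy spoke survives only if its defining tuple is sampled while none of the $\geq\eta$ conflicting objects are sampled (giving the $\eps^{100}$ factor), and union-bound. The only cosmetic difference is that the paper introduces $q_p=\lambda$ for $p\in\Ff$ and $q_p=1$ for $p\in\Pp$ and factors the union bound as $\bigl(\sum_{p\in\Ff\cup\Pp}q_p\bigr)^{3\text{ or }4}=(s+|\Pp|)^{3\text{ or }4}$, whereas you stratify by $k=|\Pp_\sigma\cap\Ff|$; both computations yield the same $\Oh(s^4)\cdot\eps^{100}$ bound.
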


The proof of~\cref{lem:no-heavy-spoke} is a direct adaptation of the proof
of~\cite[Claim 2]{esa18} to our setting. The main difference is that here we need to take
objects of $\Pp$ into account. 

Next, we focus our attention on diamonds, which, recall, are faces of the radial graph
$\Rad_\Ss$.
The \emph{weight} of a diamond $\diamondsuit_\Ss(e)$ with respect to $\Ff$ is the
number of objects of $\Ff$ that intersect
$\diamondsuit_\Ss(e)$, excluding the (at most) two objects of $\Ss$ that define
$\diamondsuit_\Ss(e)$. Note that all the objects contributing to the weight of
$\diamondsuit_\Ss(e)$ do not belong to $\Ss$. This leads to the following definition: call a diamond $\diamondsuit_\Ss(e)$ in $\Diag_\Ss$ {\em{heavy}} if its weight with respect to $\Ff$ is
larger than $\eta$. 
The next statement ensures that with high probability, in $\Diag_\Ss$ one does not observe any heavy
diamonds. The estimation of this probability basically follows the same
framework as the proof of~\cref{lem:no-heavy-spoke}, and is a direct adaptation of the proof
of~\cite[Claim 3]{esa18}.

\begin{lemma}[\GoToAppendix]\label{lem:no-heavy-diamond}
    With probability at least $\frac{3}{4}$, there is no heavy diamond in $\Diag_\Ss$.
\end{lemma}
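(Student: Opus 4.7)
The plan is to closely follow the argument of \cite[Claim~3]{esa18}, adapting it to our setting in which the objects of $\Pp$ are always included in $\Ss$. The overall scheme is a union bound over all ``candidate diamonds,'' combined with the observation that a heavy diamond can appear in $\Diag_\Ss$ only if many specific independent sampling events jointly occur, which is unlikely.

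First, I would enumerate the candidate diamonds. Every diamond $\diamondsuit_\Ss(e)$ with $e=fg$ is the face of $\Rad_\Ss$ bounded by the $4$-cycle $pfqg$, where $p,q\in\Ss$ are the two objects on either side of $e$. By \cref{lem:types-are-enough}, each of the branching points $f$ and $g$ is a singular face determined by at most four objects of $\Ss$, and by \cref{lem:few-singular-faces} each such tuple yields only constantly many singular faces. Hence a candidate diamond is specified by a tuple $A$ of at most $10$ objects from $\Ff\cup\Pp$ (the objects $p,q$ together with the objects defining $f$ and $g$), giving at most $O\bigl((|\Ff|+|\Pp|)^{10}\bigr)$ candidates.

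The main structural observation is that for a candidate with defining tuple $A$ and region $\Omega\subseteq \mathbb{S}^2$ to actually arise as a diamond of $\Diag_\Ss$, it is necessary that (i)~$A\subseteq \Ss$ and (ii)~no other object of $\Ff\setminus A$ having a vertex inside $\Omega$ is sampled into $\Ss$. Point (ii) holds because any such object $r\in\Ss$ would have vertices in $\Vor_\Ss(r)$ lying inside $\Omega$, contradicting that $\Omega$ is a face of $\Rad_\Ss$ whose boundary only involves spokes of $p$ and $q$; this can be argued along the lines of \cref{lem:persistence} by contraposition. In particular, if the candidate is \emph{heavy}---meaning more than $\eta$ objects of $\Ff\setminus A$ meet $\Omega$---then condition (ii) forces at least $\eta$ independent sampling trials to each fail.

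Given this, I would estimate probabilities. Set $c = |A\setminus\Pp|$; event (i) has probability $\lambda^c$, and event (ii), conditional on heaviness, has probability at most $(1-\lambda)^\eta \leq e^{-\lambda\eta} = \eps^{100}$, since $\lambda\eta = 100\ln(1/\eps)$ by the definitions of $\lambda$ and $\eta$. Grouping candidates by $c$ and using $|\Pp|\leq \tfrac{5}{2}s$, the union bound gives
\begin{equation*}
    \sum_{c=0}^{10} \binom{10}{c}\,|\Ff|^{c}\,|\Pp|^{10-c}\cdot \lambda^{c}\cdot \eps^{100} \ \leq\ \eps^{100}\cdot s^{10}\cdot (7/2)^{10},
\end{equation*}
which is far below $\tfrac{1}{4}$ under the standing assumption $\eps<10^{-100}$ and the choice $s=10^{10}\ln^{2}(1/\eps)/\eps^{2}$. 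The main obstacle is rigorously establishing point (ii)---that a candidate diamond is destroyed as soon as any object of $\Ff\setminus A$ intersecting its region is sampled into $\Ss$---which requires careful bookkeeping in the Voronoi machinery of \cite{esa15}, directly mirroring the corresponding step in \cite[Claim~3]{esa18}.
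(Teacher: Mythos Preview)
Your proposal is correct and follows essentially the same approach as the paper's proof, which is likewise a direct adaptation of \cite[Claim~3]{esa18} to the setting where $\Pp$ is deterministically included in $\Ss$. The only cosmetic difference is that the paper carries out explicit casework on the types (1, 2, or 3) of the two branching points bounding a candidate diamond---yielding union bounds over tuples of size at most eight---whereas you absorb all cases into a single union bound over tuples of size at most ten; both routes produce a bound of the form $\eps^{100}\cdot (s+|\Pp|)^{O(1)}$, which suffices.
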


\subsection{Finding cycles}

We proceed to finding suitable cycles $C_1,C_2$ in $\Rad_\Ss$. These cycles should be balanced with respect to two
different weight functions: one reflecting the density of $\Ff$ in different
parts of the diagram $\Diag_\Ss$, and second corresponding to the placement of the initial Swiss-cheese
separator $(\Pp,\Kk,f)$, as we wish to split this separator in a balanced way. To find the cycles, we will use the following statement. 

\begin{restatable}{theorem}{separator}{\normalfont{(\GoToAppendix)}}\label{thm:separator}
    Let $H$ be the Voronoi diagram of some set $\Ss$ of disjoint disjoint objects in a graph $G$ embedded in the sphere $\mathbb{S}^2$. Let
    $\omega$ be an assignment of nonnegative weights to vertices of $H$ such
    that the total sum of the weights is equal to $1$. Then there exists a cycle $C$ in $\Rad_\Ss$ of length at most $\sqrt{18|V(H)|}$ such that the total weight of the vertices of $H$ lying strictly
    inside any of the two regions of $\mathbb{S}^2-C$ is at most~$2/3$.
\end{restatable}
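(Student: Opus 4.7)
The plan is to reduce the statement to a classical weighted balanced cycle separator theorem for plane graphs, applied directly to the radial graph $\Rad_\Ss$. The crucial structural observation is that $\Rad_\Ss$ is a bipartite plane multigraph embedded in $\mathbb{S}^2$ in which every face has length exactly~$4$: each face is a diamond $\diamondsuit_\Ss(e)$ whose boundary traverses two objects of $\Ss$ and two branching points of $H$. Since $\Rad_\Ss$ is bipartite, any simple cycle in it has even length and automatically alternates between objects and branching points, hence already has the form required for a cycle in the radial graph as defined earlier.

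First I would count $|V(\Rad_\Ss)|$. Since $H=\Diag_\Ss$ is a $3$-regular connected planar multigraph, $2|E(H)|=3|V(H)|$, and by Euler's formula the number of faces of $H$ equals $|\Ss|=|V(H)|/2+2$. Therefore $|V(\Rad_\Ss)|=|V(H)|+|\Ss|=\tfrac{3}{2}|V(H)|+2$. Next I would extend $\omega$ to all of $V(\Rad_\Ss)$ by assigning weight $0$ to every object of $\Ss$, preserving the total mass of $1$. Invoking Miller's weighted balanced cycle separator theorem for plane graphs with faces of length at most $d$—which produces a simple cycle of length at most $2\sqrt{\lfloor d/2\rfloor\cdot n}$ splitting the vertex weight so that each open side has mass at most $\tfrac{2}{3}$—with $d=4$ and $n=\tfrac{3}{2}|V(H)|+2$ yields a cycle $C$ of length at most $2\sqrt{3|V(H)|+4}$. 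A short calculation shows this is bounded by $\sqrt{18|V(H)|}$ whenever $|V(H)|\ge 3$, and the degenerate cases $|V(H)|\le 2$ are easily handled separately since the claimed bound is then vacuously loose.

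The main technical obstacle is that Miller's theorem is classically stated for $2$-connected plane graphs, whereas $\Rad_\Ss$ may contain multi-edges and cut vertices stemming from bridges of $\Diag_\Ss$. I plan to dispose of this through the standard extension of Miller's theorem to connected plane multigraphs of bounded face length, which produces a balanced closed walk that, after routine preprocessing (e.g.\ block decomposition and recursion on the heaviest $2$-connected block), can be realized as a cycle in the radial multigraph in the sense used throughout this paper—in particular, parallel-edge $2$-cycles around bridges are permitted. This generalization is folklore in the planar-separator literature and is implicitly used in~\cite{esa15,esa18}. Combining it with the vertex counting above yields the desired cycle $C$ of length at most $\sqrt{18|V(H)|}$ balancing $\omega$ in the claimed sense.
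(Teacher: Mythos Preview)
Your approach is correct but genuinely different from the paper's. The paper does \emph{not} apply a cycle separator to $\Rad_\Ss$; instead it proves a vertex-weighted balanced \emph{noose} lemma for the Voronoi diagram $H=\Diag_\Ss$ itself (an adaptation of \cite[Theorem~4.15]{esa15}), obtaining a noose of length at most $\sqrt{\tfrac{9}{2}|V(H)|}$, and then converts that noose into a cycle in $\Rad_\Ss$ by replacing each visit to a face/branching-point pair by two radial edges, doubling the length to $\sqrt{18|V(H)|}$. So where you exploit the observation that $\Rad_\Ss$ is a quadrangulation and invoke Miller, the paper exploits $3$-regularity of $\Diag_\Ss$ and invokes sphere-cut decompositions. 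Your arithmetic is fine and the two routes meet at the same bound.

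What your route buys is directness: no noose-to-cycle translation. What the paper's route buys is that the non-$2$-connected case is handled explicitly rather than by appeal to folklore. This is exactly the part you wave away, and it is the one point where a referee would push back. When $\Diag_\Ss$ has bridges, $\Rad_\Ss$ acquires cut vertices (the object lying on both sides of a bridge), and the diamond over such a bridge is a length-$4$ closed walk that revisits that object; Miller's theorem as usually stated does not apply, and the reduction to the heaviest block plus the argument that the resulting separator can be realized as a \emph{cycle in $\Rad_\Ss$} in the paper's sense (possibly a parallel-edge $2$-cycle) is precisely what needs to be written out. The paper spends most of its proof of the noose lemma on the analogous issue---identifying a central bridgeless component, dealing with the case that some hanging component already has weight in $[\tfrac13,\tfrac12]$, redistributing weights onto the central block, and lifting back---so calling it ``routine preprocessing'' undersells the work. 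If you flesh out that block-decomposition step with the same care, your argument goes through and is a clean alternative.
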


The proof of~\cref{thm:separator} follows from a rather straightforward adjustment of the arguments contained in~\cite[Sections~4.7 and~4.8]{esa15}, hence we defer it to~\cref{sec:proof-separator}.
With~\cref{thm:separator} in hand, we can now construct the two cycles
and prove~\cref{lem:nice-separator}. 

Recall that if we perform sampling as explained in
\cref{sec:sampling}, then by
\cref{lem:sample-size,lem:no-heavy-spoke,lem:no-heavy-diamond} with probability
at least $1-(\frac{1}{4}+\frac{1}{4}+\frac{1}{4})=\frac{1}{4}$ we get $\Ss$ with $\Pp\subseteq \Ss\subseteq \Pp\cup \Ff$ satisfying the~following:
\begin{itemize}[nosep]
 \item $s/2\leq |\Ss\setminus \Pp|\leq 2s$;
 \item no spoke in $\Diag_\Ss$ is heavy; and
 \item no diamond in $\Diag_\Ss$ is heavy.
\end{itemize}
Hence, there exists a sample $\Ss$ satisfying the above properties, which we fix from now on. Let $H\coloneqq \Diag_\Ss$.
The two cycles $C_1,C_2$ will be obtained by applying \cref{thm:separator} to two different weight functions on~$H$. The first one will guarantee
balancedness with respect to the number of objects of $\Ff$. More precisely, let $\mu_\bal
\colon V(H) \to [0,1]$ be  defined as follows.  For every
object $p\in \Ff$, assign $p$ to an arbitrary diamond $\diamondsuit_\Ss(e)$,
where $e\in V(H)$, such that $p$ intersects $\diamondsuit_\Ss(e)$ (including its
boundary). Next, assign every diamond $\diamondsuit_{\Ss}(e)$ to an arbitrarily
chosen endpoint of $e$. Finally, for $f\in V(H)$ define $\mu_\bal(f)$ to be the
total number of objects assigned to $f$ through the composition of the two
assignments described above, divided by $|\Ff|$.
Clearly, $\sum_{f\in V(H)} \mu_\bal(f)=1$. Also, the absence of heavy diamonds implies the following.

\begin{claim}\label{claim:mu-bal}
     For every $f\in V(H)$, $\mu_\bal(f)\leq 10^{-5}\cdot \frac{\eps^2}{\ln (1/\eps)}$.
\end{claim}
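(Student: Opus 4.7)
The plan is to bound $\mu_\bal(f)\cdot |\Ff|$ by counting, for each branching point $f\in V(H)$, how many objects of $\Ff$ can ever be assigned to it through the two-step composition (object $\mapsto$ diamond $\mapsto$ branching point). The key structural fact to invoke is that $H=\Diag_\Ss$ is $3$-regular, as guaranteed by the construction in~\cref{sec:basic-toolbox} (since $G$ is triangulated). Therefore $f$ is the endpoint of exactly three edges of $H$, and at most three diamonds $\diamondsuit_\Ss(e)$ can be assigned to $f$ by the second assignment.

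Next, I would use the property of the fixed sample $\Ss$ that no diamond is heavy (guaranteed after the high-probability conditioning described right before the claim). That means each of the three diamonds incident to $f$ has weight at most $\eta=\frac{100\ln(1/\eps)}{s}\cdot |\Ff|$ with respect to $\Ff$, i.e., the first assignment maps at most $\eta$ objects of $\Ff$ to each such diamond. Combining the two bounds yields
\[
\mu_\bal(f)\cdot |\Ff|\;\le\; 3\eta \;=\; \frac{300\ln(1/\eps)}{s}\cdot |\Ff|,
\]
so $\mu_\bal(f)\le \frac{300\ln(1/\eps)}{s}$.

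Finally, I would substitute the definition $s=s(\eps)=\frac{10^{10}\ln^2(1/\eps)}{\eps^2}$ to obtain
\[
\mu_\bal(f)\;\le\; \frac{300\ln(1/\eps)\cdot \eps^2}{10^{10}\ln^2(1/\eps)}\;=\;3\cdot 10^{-8}\cdot\frac{\eps^2}{\ln(1/\eps)}\;\le\; 10^{-5}\cdot\frac{\eps^2}{\ln(1/\eps)},
\]
which is the desired inequality.

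There is essentially no obstacle here: the argument is a direct numerical combination of two facts (the $3$-regularity of $\Diag_\Ss$ and the no-heavy-diamond property of the fixed sample $\Ss$), followed by a straightforward substitution of $s(\eps)$. The only thing to be careful about is to emphasize that we work with the already-fixed sample $\Ss$ satisfying the three bulleted properties, so that the no-heavy-diamond condition holds deterministically for the rest of the argument rather than just with high probability.
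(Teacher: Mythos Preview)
Your approach is the same as the paper's, and your final inequality is correct, but there is one small slip in the counting. The \emph{weight} of a diamond $\diamondsuit_\Ss(e)$ explicitly excludes the (at most two) objects of $\Ss$ that define it. Since $\Ss\setminus\Pp\subseteq\Ff$, those defining objects may themselves lie in $\Ff$ and can be sent to $\diamondsuit_\Ss(e)$ by the first assignment, yet they are not counted in the weight. So ``the first assignment maps at most $\eta$ objects of $\Ff$ to each such diamond'' is not literally true; the correct bound per diamond is $\eta+2$, and the paper accounts for this by writing $\mu_\bal(f)\le \frac{3\eta+3}{|\Ff|}$ (using that the defining objects of the three diamonds incident to $f$ are exactly the at most three faces of $\Diag_\Ss$ incident to $f$). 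Numerically this changes nothing, since $|\Ff|\ge s$ makes the extra $+3$ negligible compared to $3\eta$, and your final bound $10^{-5}\cdot\frac{\eps^2}{\ln(1/\eps)}$ still holds with room to spare.
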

\begin{proof}
 Every object of $\Ff$ assigned to $f$ must either be an object of $\Ss$
 that corresponds to one of the at most three faces of $\Diag_\Ss$ incident to $f$,
 or must intersect one of the three diamonds corresponding to the edges of
 $\Diag_\Ss$ incident with $f$. Since every diamond in $\Diag_\Ss$ is not heavy,
 the claim follows due to $\mu_\bal(f)\leq \frac{3\eta+3}{|\Ff|}\leq10^{-5}\cdot
 \frac{\eps^2}{\ln (1/\eps)}$.
\end{proof}
Similarly, we define the weight function $\mu_\cmplx \colon V(H) \to [0,1]$ by setting $\mu_\cmplx(f)=\frac{2}{\ell}$ if $f$ is a branching point traversed by any of the cycles from $\Kk$, and $\mu_\cmplx(f)=0$ otherwise. Observe that since branching points have degree $3$ in $\Rad_\Ss$ and cycles of $\Kk$ are edge-disjoint, every branching point $f$ participates in at most one cycle from $\Kk$. As the total length of cycles in $\Kk$ is $\ell$, we conclude that $\sum_{f\in V(H)} \mu_\cmplx(f)=1$. 

We now apply \cref{thm:separator} to the weight functions $\mu_\bal$ and
$\mu_\cmplx$, yielding cycles $C_1$ and $C_2$ in $\Diag_\Ss$.  First, we
estimate the size of $H=\Diag_\Ss$ and the lengths of the cycles $C_1$ and
$C_2$. First, we have $|\Ss|=|\Pp|+|\Ss-\Pp|\leq \frac{9}{2}s$. As $\Diag_\Ss$
is a connected $3$-regular sphere-embedded multigraph with $|\Ss|$ faces, from
Euler's formula it follows that $H$ has less than $2|\Ss|\leq 9s$ branching
points (see~\cite[Lemma~4.5]{esa15}). So $|V(H)|\leq 9s$. Now, by
\cref{thm:separator} both $C_1$ and $C_2$ have lengths bounded by $\sqrt{18\cdot
9s}=\sqrt{162s}$.


To conclude the proof of~\cref{lem:nice-separator}, we are left with verifying
properties \ref{enum:conflict}, \ref{enum:constant-frac}, and
\ref{enum:partition}.  

\begin{proof}[Proof of property~\ref{enum:conflict} in~\cref{lem:nice-separator}]
For property~\ref{enum:conflict}, we need to show that only $\eps |\Ff|$
of objects of $\Ff$ are in conflict with any of the cycles $C_1$ or $C_2$. Recall that
each spoke of $H$ is in conflict with at most $\eta$ objects of $\Ff$. Therefore, the total number of objects in conflict with $C_1$ or $C_2$ is bounded by
\begin{displaymath}
    2 \cdot \sqrt{162s}\cdot \eta \le \frac{ \sqrt{648 s} \cdot 100 \ln(1/\eps)}{s}\cdot |\Ff| \le 10^5\cdot \frac{\ln(1/\eps)}{\sqrt{s}}\cdot |\Ff|= \eps |\Ff|,
\end{displaymath}
as required.
\end{proof}

\begin{proof}[Proof of property~\ref{enum:constant-frac} in~\cref{lem:nice-separator}]
For property~\ref{enum:constant-frac}, let $A$ be any of the two regions into which $\Gamma(C_1)$ divides the sphere $\mathbb{S}^2$. Observe that every object $p$ of $\Ff$ that is entirely contained in $A$ contributes, in the definition of the weight function $\mu_{\bal}$, to the weight of a branching point that lies either inside $A$ or on the cycles $C_1$; this is because $p$ can intersect only diamonds contained in $A$. By~\cref{thm:separator}, the total $\mu_\bal$-weight of branching points lying on either side of the cycle $C_1$ (excluding branching points on $C_1$ itself) is at most $\frac{2}{3}$, while by \cref{claim:mu-bal}, the total weight of branching points lying on $C_1$ is bounded by
$$10^{-5}\cdot
 \frac{\eps^2}{\ln (1/\eps)}\cdot \sqrt{162s}=10^{-5}\cdot\sqrt{162}\cdot  10^5\cdot \eps \leq \frac{1}{12}.$$
Hence, the total number of objects $p$ entirely contained in $A$ is bounded by $(\frac{2}{3}+\frac{1}{12})|\Ff|=\frac{3}{4}|\Ff|$. Since for every face $g$ of $R$, the corresponding region of $\mathbb{S}^2\setminus \bigcup_{C\in \Kk\cup \{C_1,C_2\}} \Gamma(C)$ is entirely contained in one of the regions of $\mathbb{S}^2\setminus \Gamma(C_1)$, property~\ref{enum:constant-frac} follows.
\end{proof}

\begin{proof}[Proof of property~\ref{enum:partition} in~\cref{lem:nice-separator}]
    For property~\ref{enum:partition}, let $g \subseteq f$ be
    a face of $R$. As $g$ is a face of a subgraph of $H$, the boundary of $g$ can be partitioned into a set $\Kk_g$ consisting of non-crossing edge-disjoint cycles in $H$. Let us now estimate the total length of the cycles in $\Kk_g$.

    Recall that the cycle $C_2$ is balanced with respect to the
    weight function $\mu_\cmplx$ in the sense of \cref{thm:separator}. Since $g$ is
    entirely contained in one of the sides of $C_2$, say $A$, it follows that every
    branching point visited by any cycle of $\Kk_g$ is either a branching point
    visited by $C_2$ or it lies strictly inside $A$. Since the total length of
    cycles in $\Kk$ is $\ell$ and every branching point traversed by a cycle of
    $\Kk$ is incident to exactly two edges of cycles in $\Kk$, the total number of branching points traversed by the cycles of $\Kk$ is $\frac{\ell}{2}$. By \cref{thm:separator}, at most $\frac{2}{3}\cdot \frac{\ell}{2}=\frac{\ell}{3}$ of these branching points may lie strictly inside $A$. On the other hand, cycle $C_2$ has length at most 
    $\sqrt{162s}\leq \frac{s}{3}$, which means that there are at most $\frac{s}{6}$ branching points traversed by $C_2$. All in all, the total number of branching points traversed by any cycle of $\Kk_g$ is bounded by
    $$\frac{\ell}{2}+\frac{s}{6}\leq \frac{s}{3}+\frac{s}{6}=\frac{s}{2}.$$
    Now every branching point traversed by any cycle of $\Kk_g$ is incident to
    exactly two edges of cycles in $\Kk_g$. So the total number of edges on the cycles of $\Kk_g$ is bounded by $s$, as required.

    It remains to define $\Pp_g$ so that $(\Pp_g,\Kk_g,g)$ is a ripe Swiss-cheese
    separator. For this, it suffices to apply \cref{lem:radial-reduce-size} to every
    cycle of $\Kk_g$ and take the union of the obtained subsets of $\Ss$ (note
    that the compatibility of cycles in $\Kk_g$ holds). The fact
    that $(\Pp_g,\Kk_g,g)$ obtained in this manner is a Swiss-cheese separator follows from \cref{lem:radial-sep-comps}.
\end{proof}

\section{Algorithm}\label{sec:algorithm}
\newcommand{\LHS}{\mathrm{LHS}}
\newcommand{\opt}{\mathrm{opt}}
\newcommand{\Sep}{\mathsf{Sep}}
\newcommand{\Cand}{\mathsf{Cand}}
\newcommand{\VorCand}{\Cc}
\newcommand{\SmallSol}{\mathsf{SmallSol}}
\newcommand{\ApproxIS}{\mathsf{ApproxIS}}

In this section, we present the proof of~\cref{thm:main}. We write $s\coloneqq s(\eps)$ for brevity and denote $n\coloneqq |V(G)|$. Our algorithm
enumerates the set of all ripe Swiss-cheese separators of complexity $s$:
\begin{align*}
    \Sep & \coloneqq \{ (\Pp,\Kk,f)\ \colon\ (\Pp,\Kk,f) \text{ is a ripe Swiss-cheese separator of complexity at most } s\}.
\end{align*}
Throughout this section we will think of $f$ in the above as of a region of $\mathbb{S}^2\setminus \bigcup_{C\in \Kk} \Gamma(C)$, as explained below \cref{def:our-separator}.

\begin{lemma}\label{obs:enum}
    $\Sep$ has size at most $|\Dd|^{\Ot(1/\eps^2)}$ and can be enumerated in time $|\Dd|^{\Ot(1/\eps^2)} \cdot n^{\Oh(1)}$.
\end{lemma}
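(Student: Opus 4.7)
The plan is to enumerate ripe Swiss-cheese separators by iterating over the three components $\Pp$, $\Kk$, and $f$ independently, bounding the number of choices for each and giving a polynomial-time verification routine. The ripeness condition $|\Pp|\le \tfrac{5}{2}\ell \le \tfrac{5}{2}s = \Ot(1/\eps^2)$ immediately caps the size of $\Pp$, so I would enumerate $\Pp$ as a subset of $\Dd$ of cardinality at most $\lfloor 5s/2 \rfloor$, yielding at most $\sum_{k=0}^{\lfloor 5s/2\rfloor}\binom{|\Dd|}{k} \le |\Dd|^{\Ot(1/\eps^2)}$ candidates. For each candidate I would first check in $\poly(|\Pp|,n)$ time that the objects in $\Pp$ are pairwise disjoint (otherwise $\Diag_\Pp$ is not defined), and if so compute $\Diag_\Pp$ in $n^{\Oh(1)}$ time by carrying out the three-step construction from \cref{sec:basic-toolbox}.

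With $\Diag_\Pp$ fixed, enumerating the set $\Kk$ reduces to enumerating subsets of edges of $\Diag_\Pp$. By $3$-regularity and Euler's formula, $\Diag_\Pp$ has $\Oh(|\Pp|)=\Oh(s)$ vertices and at most $3|\Pp|-3 = \Oh(s)$ edges. Since the cycles of $\Kk$ are edge-disjoint with total complexity at most $s$, their union $E(\Kk) \coloneqq \bigcup_{C\in\Kk} E(C)$ is a subset of $E(\Diag_\Pp)$ of size at most $s$. Moreover, two edge-disjoint cycles in a $3$-regular multigraph are necessarily vertex-disjoint, because a shared vertex would require at least four incident edges; consequently, the partition of $E(\Kk)$ into individual cycles is uniquely recoverable as the connected components of the selected edge set. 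Hence I would enumerate all $2^{\Oh(s)}$ subsets of $E(\Diag_\Pp)$ of cardinality at most $s$, and for each verify in $\poly(s)$ time that it decomposes into cycles and that these cycles are pairwise non-crossing in the sphere embedding inherited from $G$.

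Finally, for each valid pair $(\Pp,\Kk)$, the face $f$ ranges over the faces of the plane subgraph $\bigcup_{C\in\Kk}C$, which has at most $s$ vertices and $s$ edges, hence $\Oh(s)$ faces by Euler's formula. For each face I would check in $\poly(s)$ time that its boundary contains every cycle of $\Kk$, as required by \cref{def:our-separator}. Multiplying the three bounds gives
\[
|\Sep| \le |\Dd|^{\Ot(1/\eps^2)} \cdot 2^{\Oh(s)} \cdot \Oh(s) = |\Dd|^{\Ot(1/\eps^2)},
\]
and each per-candidate verification runs in $n^{\Oh(1)}$ time, so the total enumeration time is $|\Dd|^{\Ot(1/\eps^2)} \cdot n^{\Oh(1)}$. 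The only step that is not pure bookkeeping — and the one I would flag as needing care — is the claim that an edge subset of $\Diag_\Pp$ uniquely encodes a candidate $\Kk$; this is where I would invoke the $3$-regularity argument above to conclude that edge-disjointness forces vertex-disjointness, after which the sphere embedding makes non-crossingness efficiently checkable. I do not foresee a real obstacle beyond this observation.
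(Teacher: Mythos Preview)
Your proposal is correct and follows essentially the same approach as the paper: enumerate $\Pp$ among $|\Dd|^{\Oh(s)}$ candidates, enumerate $\Kk$ among $2^{\Oh(s)}$ edge subsets of $\Diag_\Pp$, enumerate $f$ among $\Oh(s)$ faces, and verify each candidate in polynomial time. The one place where you add detail beyond the paper is the justification that an edge subset of $\Diag_\Pp$ uniquely encodes $\Kk$; the paper simply asserts this, whereas you observe that $3$-regularity forces edge-disjoint cycles to be vertex-disjoint, so the cycles are recovered as connected components --- this is a correct and welcome elaboration.
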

\begin{proof}
    Recall that if $(\Pp,\Kk,f)$ is a ripe Swiss-chees separator of complexity at most $s$, then $|\Pp|\leq \frac{5}{2}s$. Therefore, to describe $(\Pp,\Kk,f)$ it suffices to provide the following components:
    \begin{itemize}[nosep]
     \item An independent set $\Pp\subseteq \Dd$ consisting of $\frac{5}{2}s$ objects, for which there are $|\Dd|^{\Oh(s)}$ options.
     \item A set $\Kk$ consisting of edge-disjoint non-crossing cycles in $\Diag_\Pp$. Note that $\Kk$ is uniquely defined by the set of edges participating in the cycles of $\Kk$, and $\Diag_\Ss$ has $\Oh(s)$ edges, hence there are $2^{\Oh(s)}$ options for $\Kk$.
     \item A face $f$ of the subgraph of $\Diag_\Pp$ consisting of edges and vertices of the cycles $\Kk$, for which there are at most $\Oh(s)$ options.
    \end{itemize}
    For each candidate as described above, we can verify in polynomial time whether it indeed defines a ripe Swiss-cheese separator. Since $s=\Ot(1/\eps^2)$, the claim follows.
\end{proof}

Next, we define the order in which separators in $\Sep$ will be considered. For two separators $(\Pp_1,\Kk_1,f_1)$ and $(\Pp_2,\Kk_2,f_2)$ in $\Sep$, we
say that $(\Pp_1,\Kk_1,f_1) \preceq (\Pp_2,\Kk_2,f_2)$ if the region $f_1$ is contained in
the region $f_2$. Observe that $\preceq$ is a partial order on $\Sep$. We also
define the optima that will be sought by our dynamic programming procedure.

\begin{definition}
    We say that an independent set $\Ff \subseteq \Dd$ \emph{is inside}
    $(\Pp,\Kk,f)$ if $\Ff\subseteq \Allowed(\Pp,\Kk,f)$. The maximum size of such an
    independent set will be denoted by $\opt(\Pp,\Kk,f)$.
\end{definition}

Next, we prove that given a separator $(\Pp,\Kk,f)\in \Sep$, one can enumerate a
family of candidate splits of $(\Pp,\Kk,f)$ into several separators from $\Sep$
with strictly smaller regions, with the guarantee that among the splits there is
one where only small fraction of objects of $\Ff$ is lost. The proof is just an application of~\cref{lem:nice-separator}.

\begin{lemma}\label{lem:enumerate-splits}
 Given $(\Pp,\Kk,f)\in \Sep$, one can in time $|\Dd|^{\Ot(1/\eps^2)}\cdot n^{\Oh(1)}$ enumerate a family $\Cand(\Pp,\Kk,f)$ of size $|\Dd|^{\Ot(1/\eps^2)}$, such that:
 \begin{itemize}[nosep]
  \item each member of $\Cand(\Pp,\Kk,f)$ is a family $\VorCand\subseteq \Sep$
      such that for each $(\Pp',\Kk',f')\in \VorCand$, the region $f'$ is a
      strict subset of $f$, and the interiors of $f'$ for $(\Pp',\Kk',f')\in
      \VorCand$ are pairwise disjoint; and
  \item provided $\opt(\Pp,\Kk,f)\geq s$, there exists $\VorCand\in \Cand(\Pp,\Kk,f)$ such that $$\sum_{(\Pp',\Kk',f')\in \VorCand} \min\left(\frac{3}{4}\cdot \opt(\Pp,\Kk,f),\opt(\Pp',\Kk',f')\right)\geq (1-\eps)\cdot \opt(\Pp,\Kk,f).$$
 \end{itemize}
\end{lemma}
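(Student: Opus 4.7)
The plan is to enumerate the candidate splits by guessing what \cref{lem:nice-separator} could produce when applied to $(\Pp,\Kk,f)$, and then argue that when the lemma is invoked on an optimum witness, the resulting split lies among the enumerated candidates and realises the required bound. Concretely, each candidate is specified by (i) an independent set $\Ss \subseteq \Dd$ with $\Pp \subseteq \Ss$ and $|\Ss\setminus\Pp| \leq 2s$, and (ii) two cycles $C_1,C_2$ in $\Rad_\Ss$ of length at most $\sqrt{162\,s}$. From this data we build the subgraph $R \subseteq \Rad_\Ss$ consisting of $C_1,C_2$ and the cycles of $\Kk$ (which remain cycles in $\Rad_\Ss$ by \cref{lem:radial-sep-comps}); for every face $g$ of $R$ strictly contained in $f$ we partition $\partial g$ into edge-disjoint non-crossing cycles $\Kk_g$ in $\Rad_\Ss$ and apply \cref{lem:radial-reduce-size} to each cycle to extract $\Pp_g \subseteq \Ss$ of size at most $\tfrac{5}{2}s$. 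We place $\VorCand = \{(\Pp_g,\Kk_g,g)\}_g$ in $\Cand(\Pp,\Kk,f)$ whenever every $(\Pp_g,\Kk_g,g)$ is a valid element of $\Sep$. There are $|\Dd|^{\Oh(s)}$ choices of $\Ss$ and $|V(\Rad_\Ss)|^{\Oh(\sqrt{s})} = 2^{\Ot(1/\eps)}$ choices of $(C_1,C_2)$, and each candidate is assembled in $n^{\Oh(1)}$ time, fitting the declared budget $|\Dd|^{\Ot(1/\eps^2)} \cdot n^{\Oh(1)}$.

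For correctness, fix an optimum witness $\Ff^\star \subseteq \Allowed(\Pp,\Kk,f)$ with $|\Ff^\star| = \opt(\Pp,\Kk,f) \geq s$ and apply \cref{lem:nice-separator} to $(\Pp,\Kk,f)$ and $\Ff^\star$ with an accuracy parameter $\eps' = \Theta(\eps)$. The resulting $\Ss,C_1,C_2$ match one of our enumerated candidates. The key claim is that every $p \in \Ff^\star \setminus \Ss$ that is not in conflict with $C_1$ or $C_2$ lies entirely inside a unique face $g \subseteq f$ of $R$ and satisfies $p \in \Allowed(\Pp_g,\Kk_g,g)$. Containment in $g$ holds because $p \subseteq f$ (as $p \in \Allowed(\Pp,\Kk,f)$) and $p$ shares no vertex with $\Gamma(C)$ for any $C \in \Kk \cup \{C_1,C_2\}$ (sharing a vertex would force either $p$ to meet some object of $\Pp \cup \Ss$, or a conflict with a spoke incident to a branching point on $C$); hence $p$ does not cross the cycles bounding $g$. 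Independence of $\Pp_g \cup \{p\}$ is immediate from $\Pp_g \subseteq \Ss \subseteq \Pp \cup \Ff^\star$ and $p \in \Ff^\star \setminus \Ss$. The non-ban condition reduces, via \cref{lem:persistence,lem:radial-sep-comps,lem:radial-reduce-size}, to $p$ not conflicting with the spokes of $\Diag_\Pp$ at branching points on cycles of $\Kk$ (which holds because $p \in \Allowed(\Pp,\Kk,f)$) and to $p$ not conflicting with the spokes of $\Diag_\Ss$ at branching points on $C_1,C_2$ (which is exactly property~\ref{enum:conflict} applied to $p$).

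By property~\ref{enum:constant-frac}, $|\Ff^\star \cap g| \leq \tfrac{3}{4}\opt$ for every face $g \subseteq f$ of $R$, so also $|\Ff^\star \cap \Allowed(\Pp_g,\Kk_g,g)| \leq \tfrac{3}{4}\opt$ and the minimum in the lemma statement is attained at the second argument. Summing yields
\[
\sum_g \min\!\left(\tfrac{3}{4}\opt,\,\opt(\Pp_g,\Kk_g,g)\right) \;\geq\; \sum_g |\Ff^\star \cap \Allowed(\Pp_g,\Kk_g,g)| \;\geq\; \opt - |\Ss\setminus\Pp| - \eps'\,\opt.
\]
The main technical point—and the principal obstacle—is absorbing the additive loss $|\Ss\setminus\Pp| \leq 2s(\eps')$ arising from objects of $\Ff^\star$ that land in the $\Pp_g$'s (and are therefore not credited to any $\opt(\Pp_g,\Kk_g,g)$) into the prescribed multiplicative $(1-\eps)$ slack. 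This is achieved by choosing $\eps' = \Theta(\eps)$ small enough that $\eps'\opt + 2s(\eps') \leq \eps\,\opt$; the premise $\opt \geq s(\eps)$ together with the concrete shape $s(\eps) = \Theta(\ln^2(1/\eps)/\eps^2)$ gives the slack needed to convert the additive loss into the multiplicative one.
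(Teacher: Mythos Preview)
Your enumeration scheme and overall structure match the paper's proof exactly. The gap is in the final ``absorption'' step: the $\eps'$-rescaling trick you propose does not work. You want $\eps'\opt + 2s(\eps') \leq \eps\,\opt$, but for any $\eps'\leq \eps$ we have $s(\eps')\geq s(\eps)$, while you only know $\opt\geq s(\eps)$. So already $2s(\eps')\leq (\eps-\eps')\opt$ would force $2s(\eps)\leq \eps\, s(\eps)$, i.e.\ $\eps\geq 2$, which is impossible. No choice of $\eps'=\Theta(\eps)$ rescues this, because the additive term $2s(\eps')=\Theta(\ln^2(1/\eps)/\eps^2)$ is a full factor $1/\eps$ larger than $\eps\, s(\eps)=\Theta(\ln^2(1/\eps)/\eps)$.

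The root cause is that you over-count the loss. An object $p\in\Ss\setminus\Pp$ that is \emph{not traversed} by $C_1$ or $C_2$ is, in fact, credited: such $p$ lies entirely in some face $g\subseteq f$, and $p\in\Allowed(\Pp_g,\Kk_g,g)$ holds because (i) $p\subseteq g$, (ii) being in $\Ss$, $p$ is never in conflict with a spoke of $\Diag_\Ss$ (for a spoke to $q\neq p$ every vertex $v$ on it has $\dist(v,q)\leq\dist(v,p)$, and conflict with its own spoke is excluded by definition), and (iii) even if $p\in\Pp_g$, the condition ``$\Pp_g\cup\{p\}$ is independent'' is vacuous. Hence the only objects of $\Ff^\star$ genuinely lost, beyond the $\leq\eps\,\opt$ conflicting ones, are those \emph{traversed} by $C_1$ or $C_2$: at most $\tfrac12(|C_1|+|C_2|)\leq\sqrt{162s}$ of them. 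Since $\opt\geq s$, this is at most $\sqrt{162/s}\cdot\opt=\Theta(\eps/\ln(1/\eps))\cdot\opt\ll\eps\,\opt$, so you can simply apply \cref{lem:nice-separator} with the given $\eps$ and absorb this lower-order term into the existing slack --- no rescaling is needed. The paper's proof is terse on exactly this point (it asserts that every non-conflicting object of $\Ff$ lands inside some face), but once you count the loss correctly the bound $(1-\eps)\opt$ follows directly, possibly after adjusting constants.
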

\begin{proof}
 We enumerate the family $\Cand(\Pp,\Kk,f)$ as follows. Iterate through all
 independent sets $\Ss^\circ\subseteq \Allowed(\Pp,\Kk,f)$ of size at most $2s$
 and all pairs of cycles $C_1,C_2$ in $\Rad_\Ss$, where $\Ss\coloneqq \Pp\cup
 \Ss^\circ$. For each such choice, let $R$ be the subgraph of $\Rad_\Ss$
 consisting of all vertices and edges traversed by the cycles of $\Kk\cup
 \{C_1,C_2\}$ (where the cycles of $\Kk$ persist as cycles in $\Rad_{\Ss}$ due
 to \cref{lem:radial-sep-comps}). For every face $g$ of $R$ satisfying
 $g\subseteq f$, construct a ripe Swiss-cheese separator $(\Pp_g,\Kk_g,g)$ by
 applying \cref{lem:radial-reduce-size} to $(\Ss,\Kk_g,g)$, where $\Kk_g$ is the
 decomposition of the boundary of $g$ into edge-disjoint non-crossing cycles
 in~$\Diag_\Ss$. Note that the compatibility is preserved. Next, we let
 $\VorCand$ be the set of separators $(\Pp_g,\Kk_g,g)$ as described above. We include $\VorCand$ in $\Cand(\Pp,\Kk,f)$ if it satisfies the first condition from the lemma statement: $\VorCand\subseteq \Sep$ (equivalently, every separator in $\VorCand$ has complexity at most~$s$) and $g\subsetneq f$ for all $(\Pp_g,\Kk_g,g)\in \VorCand$.
 
 The family $\Cand(\Pp,\Kk,f)$ constructed in this way will have size
 $|\Dd|^{\Oh(s)}$, which is $|\Dd|^{\Ot(1/\eps^2)}$ due to $s=\Ot(1/\eps^2)$. It
 can also be enumerated in time $|\Dd|^{\Ot(1/\eps^2)}\cdot n^{\Oh(1)}$.
 
 It remains to prove the second postulated condition. Let then $\Ff\subseteq \Allowed(\Pp,\Kk,f)$ be the maximum-size independent set that is inside $(\Pp,\Kk,f)$. Then $|\Ff|=\opt(\Pp,\Kk,f)$, which by assumption is at least $s$. Now \cref{lem:nice-separator} asserts that the enumeration algorithm explained above will consider a suitable choice of $\Ss$ and $C_1,C_2$ for which the constructed family $\VorCand$ will be included in $\Cand(\Pp,\Kk,f)$, and for which we have the following properties:
 \begin{itemize}[nosep]
  \item $|\Ff\cap \Allowed(\Pp_g,\Kk_g,g)|\leq \frac{3}{4}|\Ff|=\frac{3}{4}\cdot \opt(\Pp,\Kk,f)$ for all $(\Pp_g,\Kk_g,g)\in \VorCand$, and
  \item at least $(1-\eps)|\Ff|$ objects from $\Ff$ are not in conflict with $C_1$ and $C_2$.
 \end{itemize}
  Note that the first condition above in particular implies that $g\subsetneq f$ for all $(\Pp_g,\Kk_g,g)\in \VorCand$. Further, every object of $\Ff$ that is not in conflict with $C_1$ and $C_2$ is inside one of the Swiss-cheese separators $(\Pp_g,\Kk_g,g)\in \VorCand$. Hence 
 \begin{align*}
\sum_{(\Pp_g,\Kk_g,g)\in \VorCand} \min\left(\frac{3}{4}\cdot \opt(\Pp,\Kk,f),\opt(\Pp_g,\Kk_g,g)\right)&\geq \sum_{(\Pp_g,\Kk_g,g)\in \VorCand} |\Ff\cap \Allowed(\Pp_g,\Kk_g,g)|\\
& \geq (1-\eps)\cdot \opt(\Pp,\Kk,f).
 \end{align*}
 This concludes the proof.
\end{proof}

\paragraph*{Description of the algorithm.}
Now, we present our algorithm. The idea is to perform dynamic programming on
Swiss-cheese separators. First, we enumerate the set $\Sep$
using~\cref{obs:enum}. We construct two dynamic programming tables $\SmallSol,
\ApproxIS \colon \Sep \to \nat$. We initially set for every $(\Pp,\Kk,f) \in \Sep$:
\begin{displaymath}
    \SmallSol[\Pp,\Kk,f]\coloneqq
    0\qquad\textrm{and}\qquad \ApproxIS[\Pp,\Kk,f]\coloneqq 0.
\end{displaymath}
The intention is that after filling in the tables, for all
$(\Pp,\Kk,f)\in \Sep$ we will have the following:
\begin{align}
    \SmallSol[\Pp,\Kk,f]&=\min(s,\opt(\Pp,\Kk,f))\qquad\textrm{and}\nonumber\\
    \opt(\Pp,\Kk,f)^{1-4\eps}\,&\leq \ApproxIS[\Pp,\Kk,f]\leq
    \opt(\Pp,\Kk,f).\label{eq:wydra}
\end{align}
That is, in $\SmallSol[\Pp,\Kk,f]$ we store the maximum size of a solution, but
capped at the value $s$, and in $\ApproxIS[\Pp,\Kk,f]$, we store an approximation of
the maximum size of a solution. Note that the above bound will naturally lead to an
$\text{OPT}^{4\eps}$-approximation algorithm. To achieve the statement of
\cref{thm:main}, we rescale $\eps$ by a factor of $4$.

In the first phase, we compute the table $\SmallSol[\cdot]$. To do so, we
iterate over every possible separator $(\Pp,\Kk,f) \in \Sep$ and every set $\Ff \subseteq
\Allowed(\Pp,\Kk,f)$ of size at most $s$. In $\SmallSol[\Pp,\Kk,f]$, we simply store the
maximum cardinality among the considered sets $\Ff$ that turned out to be
independent. This operation can be implemented in time $|\Dd|^{\Ot(s)} \cdot
n^{\Oh(1)}$.

In the second phase, we compute the table $\ApproxIS[\cdot]$. To fill in
$\ApproxIS[\cdot]$, we iterate over separators $(\Pp,\Kk,f) \in \Sep$ according to
the partial order $\preceq$. In other words, if $f_1 \subsetneq f_2$, then we
consider the separator $(\Pp,\Kk_1,f_1)$ before the separator $(\Pp,\Kk_2,f_2)$.

Now, assuming that we are considering $(\Pp,\Kk,f) \in \Sep$, we enumerate the family $\Cand(\Pp,\Kk,f)$ provided by \cref{lem:enumerate-splits}. Then we set
$$\ApproxIS[\Pp,\Kk,f] \coloneqq \max\left(\SmallSol[\Pp,\Kk,f],\ \max_{\VorCand\in \Cand(\Pp,\Kk,f)} \left\{\sum_{(\Pp',\Kk',f')\in \VorCand}\ApproxIS[\Pp',\Kk',f']\right\} \right).$$
In other words, $\ApproxIS[\Pp,\Kk,f]$ is set to be the largest value among the candidates: $\SmallSol[\Pp,\Kk,f]$ and $\sum_{(\Pp',\Kk',f')\in \VorCand}\ApproxIS[\Pp',\Kk',f']$, for all $\Cc\in \Cand(\Pp,\Kk,f)$.

Finally, observing that $(\emptyset,\emptyset,\mathbb{S}^2)$ is also a valid
ripe Swiss-cheese separator that belongs to $\Sep$, we return
$\ApproxIS[\emptyset,\emptyset,\mathbb{S}^2]$ as the final answer. This
concludes the description of the algorithm. Note that the number of states in
our dynamic programming tables, as well as the number of operations required to
compute them, is $|\Dd|^{\Ot(1/\eps^2)} \cdot n^{\Oh(1)}$. Therefore, to prove
\cref{thm:main}, we need to justify the approximate correctness of the
algorithm. That is, we need to prove the assertion~\eqref{eq:wydra}.

\paragraph*{Approximation factor.}
We will need the following inequality.

\begin{proposition}\label{prop:inequality}
    Let $\delta, c \in (0,1)$ and $a_1,\ldots,a_m,A \in \real$
    be positive  reals such that
    $c A \ge a_i$ for all $i \in \{1,\ldots,m\}$ and $\sum_{i=1}^m a_i \ge
    c^\delta A$. Then 
    $$\sum_{i=1}^m (a_i)^{1-\delta} \ge A^{1-\delta}.$$
\end{proposition}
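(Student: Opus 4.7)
The plan is to reduce the inequality to a one-line algebraic manipulation by exploiting the pointwise upper bound $a_i \le cA$ to turn the nonlinear quantities $a_i^{1-\delta}$ into something linear in $a_i$. Specifically, I would write $a_i^{1-\delta} = a_i \cdot a_i^{-\delta}$ and then use $a_i \le cA$ together with $-\delta < 0$ to deduce $a_i^{-\delta} \ge (cA)^{-\delta}$. This gives the pointwise bound $a_i^{1-\delta} \ge (cA)^{-\delta}\, a_i$.

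Summing over $i$ and applying the hypothesis $\sum_{i=1}^m a_i \ge c^\delta A$ yields
\[
\sum_{i=1}^m a_i^{1-\delta} \;\ge\; (cA)^{-\delta}\sum_{i=1}^m a_i \;\ge\; (cA)^{-\delta}\cdot c^\delta A \;=\; c^{-\delta}A^{-\delta}\cdot c^{\delta}A \;=\; A^{1-\delta},
\]
which is exactly what we want.

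There is essentially no obstacle here; the only thing worth double-checking is that $cA > 0$ and $a_i > 0$ so that the negative powers are well defined (guaranteed by positivity of all quantities and $c \in (0,1)$), and that both inequalities point in the right direction (the first because raising a positive number smaller than $cA$ to a negative power flips the inequality; the second is a direct substitution from the hypothesis). No induction, convexity, or Hölder-type estimate is required.
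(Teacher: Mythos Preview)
Your proof is correct and follows essentially the same approach as the paper: write $a_i^{1-\delta} = a_i \cdot a_i^{-\delta}$, use $a_i \le cA$ to bound $a_i^{-\delta} \ge (cA)^{-\delta}$, then sum and apply the hypothesis $\sum_i a_i \ge c^\delta A$.
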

\begin{proof}
    By the assumption $c A \ge a_i$ we have:
    $$\LHS \coloneqq \sum_{i=1}^m (a_i)^{1-\delta} = \sum_{i=1}^m
        \frac{a_i}{(a_i)^\delta} \ge c^{-\delta} \cdot \sum_{i=1}^m
        \frac{a_i}{A^\delta}.$$
    Next, we use the assumption $\sum_{i=1}^m a_i \ge c^\delta A$ and conclude that
    $\LHS \ge c^{-\delta} \cdot \frac{c^\delta A}{A^\delta} =
    A^{1-\delta}$.
\end{proof}
Next, we proceed to the proof
of~\eqref{eq:wydra}. Note that $\SmallSol[\Pp,\Kk,f]=\min(s,\opt(\Kk,f))$ for all
$(\Pp,\Kk,f)\in \Sep$ is clear from the way we computed the entries of $\SmallSol[\cdot]$. So we are left with proving the following.

\begin{lemma}
    For every Swiss-cheese separator $(\Pp,\Kk,f) \in \Sep$ of complexity at
    most $s$, we have $$ \opt(\Pp,\Kk,f)^{1-4\eps}\leq
    \ApproxIS[\Pp,\Kk,f] \leq \opt(\Pp,\Kk,f).$$
\end{lemma}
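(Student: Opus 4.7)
The plan is to prove both inequalities simultaneously by strong induction on the partial order $\preceq$: assuming the statement for every $(\Pp',\Kk',f')\in\Sep$ with $f'\subsetneq f$, we derive it for $(\Pp,\Kk,f)$. This induction is well-founded because $\Sep$ is finite and the region strictly decreases from a parent separator to any of its children in a candidate split, which is also the order in which $\ApproxIS$ is computed.

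The upper bound $\ApproxIS[\Pp,\Kk,f]\leq \opt(\Pp,\Kk,f)$ is the routine direction. By construction $\SmallSol[\Pp,\Kk,f]\leq \opt(\Pp,\Kk,f)$, so it suffices to handle each candidate $\VorCand\in \Cand(\Pp,\Kk,f)$. The regions $f_i$ of the separators in $\VorCand$ have pairwise disjoint interiors and lie inside $f$, so picking maximum independent sets in $\Allowed(\Pp_i,\Kk_i,f_i)$ and taking their union produces an independent set inside $(\Pp,\Kk,f)$ (using \cref{lem:radial-sep-comps} to see that the compatibility conditions in the parent separator are preserved). Combined with the inductive hypothesis $\ApproxIS[\Pp_i,\Kk_i,f_i]\leq \opt(\Pp_i,\Kk_i,f_i)$ this yields $\sum_i\ApproxIS[\Pp_i,\Kk_i,f_i]\leq\opt(\Pp,\Kk,f)$.

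For the lower bound, if $\opt(\Pp,\Kk,f)\leq s$ then the first phase of the algorithm finds an optimum exactly, so $\ApproxIS[\Pp,\Kk,f]\geq \SmallSol[\Pp,\Kk,f]=\opt(\Pp,\Kk,f)\geq \opt(\Pp,\Kk,f)^{1-4\eps}$. Otherwise, set $A\coloneqq \opt(\Pp,\Kk,f)>s$ and apply \cref{lem:enumerate-splits} to obtain a candidate $\VorCand\in\Cand(\Pp,\Kk,f)$ with $\sum_i b_i\geq (1-\eps)A$, where $b_i\coloneqq \min\bigl(\tfrac{3}{4}A,\,\opt(\Pp_i,\Kk_i,f_i)\bigr)$. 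Since $b_i\leq \opt(\Pp_i,\Kk_i,f_i)$ and the exponent $1-4\eps$ is positive, the inductive hypothesis yields $\ApproxIS[\Pp_i,\Kk_i,f_i]\geq \opt(\Pp_i,\Kk_i,f_i)^{1-4\eps}\geq b_i^{1-4\eps}$, and the recursive definition of $\ApproxIS$ forces $\ApproxIS[\Pp,\Kk,f]\geq \sum_i b_i^{1-4\eps}$.

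The induction is closed by applying \cref{prop:inequality} with $c=\tfrac{3}{4}$ and $\delta=4\eps$: the cap $b_i\leq cA$ is built into the definition of $b_i$, and the condition $\sum_i b_i\geq c^\delta A$ reduces to $(1-\eps)\geq (3/4)^{4\eps}$, which expands to $1-\eps\geq 1-4\eps\ln(4/3)+O(\eps^2)$; this holds because $4\ln(4/3)>1$ and $\eps$ is assumed very small. The proposition then yields $\sum_i b_i^{1-4\eps}\geq A^{1-4\eps}$, as required. The essential point, and what I expect to be the most delicate step, is precisely the choice of the exponent $4\eps$: the constant $4$ is calibrated so that $4\ln(4/3)>1$, ensuring that the linear slack $\eps$ furnished by \cref{lem:enumerate-splits} dominates the exponential gap between $1$ and $(3/4)^{4\eps}$, which is exactly what allows the approximation invariant to be preserved across a recursion of \emph{unbounded} depth and thus avoids the quasipolynomial overhead of the QPTAS arguments of~\cite{adamaszek19,esa18}.
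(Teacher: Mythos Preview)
Your argument is correct and follows the paper's proof essentially line by line: induction along $\preceq$, the base case handled by $\SmallSol$, and in the inductive step an application of \cref{lem:enumerate-splits} followed by \cref{prop:inequality} with $c=\tfrac{3}{4}$, $\delta=4\eps$, and $a_i=\min\bigl(\tfrac{3}{4}\opt(\Pp,\Kk,f),\opt(\Pp_i,\Kk_i,f_i)\bigr)$. The only cosmetic difference is in verifying $1-\eps\geq(3/4)^{4\eps}$: the paper invokes Bernoulli's inequality $(1-x)^r\leq 1-rx$ for $r,x\in[0,1]$ directly (which gives the bound for all $\eps\in(0,1/4]$, not just asymptotically), whereas you argue via a first-order expansion and the numerical fact $4\ln(4/3)>1$, relying on $\eps$ being small enough to absorb the $O(\eps^2)$ term.
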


\begin{proof}
    For the upper bound, a straightforward inductive argument over separators of
    $\Sep$ ordered by $\preceq$ shows that whenever $\ApproxIS[\Pp,\Kk,f]$ is
    set to some value, say $w$, then this is witnessed by the existence of an
    independent set of size $w$ that is inside $(\Pp,\Kk,f)$. It
    follows that $\ApproxIS[\Pp,\Kk,f] \leq \opt(\Pp,\Kk,f)$ for all $(\Pp,\Kk,f)\in \Sep$.

    We proceed to the proof of the lower bound, which is also by induction on
    $(\Pp,\Kk,f)\in \Sep$ according to the order
    $\preceq$. Note that whenever
    $\opt(\Pp,\Kk, f) \le s$, we have $\ApproxIS[\Pp,\Kk,f] = \opt(\Pp,\Kk,f)$ due to taking $\SmallSol[\Pp,\Kk,f]$ into account among the candidates for the value of $\ApproxIS[\Pp,\Kk,f]$. So assume otherwise, that $\opt(\Pp,\Kk, f)>s$.
    In that case, the second point of Lemma~\ref{lem:enumerate-splits} asserts that there exists $\VorCand\in \Cand(\Pp,\Kk,f)$ such that 
    \begin{equation}\label{eq:nutria}
    \sum_{(\Pp',\Kk',f')\in \VorCand} \min\left(\frac{3}{4}\cdot \opt(\Pp,\Kk,f),\opt(\Pp',\Kk',f')\right)\geq (1-\eps)\cdot \opt(\Pp,\Kk,f).                                                                                                                                                                                                                                                                          \end{equation}
    As $f'\subsetneq f$ for all $(\Pp',\Kk',f')\in \VorCand$, by induction we know that
    \begin{equation}\label{eq:manat}\opt(\Pp',\Kk',f')^{1-4\eps}\leq \ApproxIS[\Pp',\Kk',f']\qquad\textrm{for all }(\Pp',\Kk',f')\in \VorCand.
    \end{equation}
    We may now use \cref{prop:inequality} with 
    $$\delta \coloneqq 4\eps,\quad c \coloneqq
    \frac{3}{4},\quad A \coloneqq
    \opt(\Pp,\Kk,f),\quad \textrm{and}\quad a_i\coloneqq
    \min\left(\frac{3}{4}\cdot \opt(\Pp,\Ff,f),\opt(\Pp_i,\Ff_i,f_i)\right),$$
    where $\{(\Pp_i,\ff_i,f_i)\colon i=1,\ldots,m\}$ is an arbitrary enumeration
    of $\VorCand$.
    Indeed, for such a choice we have on one hand $cA\geq a_i$ for all $i\in \{1,\ldots,m\}$, and on the other hand, by Bernoulli's inequality\footnote{We use the following variant: $(1-x)^r\leq 1-rx$ for all $r,x\in [0,1]$.}, 
    $$1-\eps\geq \left(1-\frac{1}{4}\right)^{4\eps} =c^\delta,$$
    hence~\eqref{eq:nutria} implies that $\sum_{i=1}^m a_i\geq c^\delta A$. By \cref{prop:inequality} we conclude that
    \begin{equation}\label{eq:mors}\sum_{i=1}^m \opt(\Pp_i,\Kk_i,f_i)^{1-4\eps} \geq \sum_{i=1}^m (a_i)^{1-4\eps} \geq A^{1-4\eps}=\opt(\Pp,\Ff,f)^{1-4\eps}.\end{equation}
    As $\Cc\in \Cand(\Pp,\Ff,f)$, the algorithm considers $\Cc$ when computing $\ApproxIS[\Pp,\Ff,f]$, and hence
    \begin{equation}\label{eq:foka}
    \ApproxIS[\Pp,\Ff,f]\geq \sum_{i=1}^m \ApproxIS[\Pp_i,\Kk_i,f_i]
    \end{equation}
    We may now combine \eqref{eq:manat},~\eqref{eq:mors}, and~\eqref{eq:foka} to conclude that $\ApproxIS[\Pp,\Ff,f]\geq \opt(\Pp,\Ff,f)^{1-4\eps}$, as required.
   So the induction proof is complete.
\end{proof}

This concludes the proof of~\cref{thm:main}.

\paragraph*{Acknowledgements.} The results presented in this paper were obtained during the trimester on Discrete Optimization at Hausdorff Research Institute for Mathematics (HIM) in Bonn, Germany. 
We are thankful for the possibility of working in the stimulating and creative research environment at HIM.

\bibliographystyle{abbrv}
\bibliography{bib}

\begin{appendix}
\section{Proofs from~\cref{sec:separator}}\label{sec:appendix-a}

This section contains the missing proofs from~\cref{sec:separator}. We emphasize
again that to a large extent, the proofs of \cref{lem:sample-size,lem:no-heavy-spoke,lem:no-heavy-diamond} below follow~\cite[Section
2.2]{esa18}.

\begin{proof}[Proof of~\cref{lem:no-heavy-spoke}]
    {\bf{The remainder of this proof is in its entirety a paraphrase and an adaptation of the proof of \cite[Claim~2 in the proof of Lemma~4]{esa18}, with a majority of text taken verbatim.}}
    
    By~\cref{lem:types-are-enough} there are three cases to be analysed based
    on whether the spoke in question comes from a face of type 1, type 2, or type 3.
    For the sake of presentation, let us first focus on the case of type 1. 

    Consider any triple $p_1,p_2,p_3$ of different objects from $\Ff \cup \Pp$ and let $f$ be a singular type-1 face  for
    $(p_1,p_2,p_3)$. Suppose further that $P$ is one of the spokes incident to
    $f$ induced in the Voronoi diagram of $\{p_1,p_2,p_3\}$.
    Importantly, let us assume that spoke $P$ is heavy.
    Let us estimate the probability of the following event

    \begin{displaymath}
        \mathrm{Event\; \textbf{A}}: p_1,p_2,p_3 \text{ all belong to } \Ss \text{ and } P \text{ is a spoke induced in } \Diag_{\Ss}.
    \end{displaymath}

    For the Event \textbf{A} to happen two things must happen: (i) $p_1,p_2,p_3 \in \Ss$,
    and (ii) for each $p' \in \Ff \setminus \{p_1,p_2,p_3\}$ that is in conflict
    with $P$, the object $p'$ must not be included in $\Ss$.
    Let $\mathcal{Z}$ be the set of such objects $p'$. Observe that the
    probability that object $p'$ is not included in $\Ss$ is $(1-\lambda)$.
    As for (i), let $q_{p}$ be the probability that $p \in \Ss$. Observe that
    $q_p = \lambda$ if $p \in \Ff$ and $q_p=1$ if $p\in \Pp$.
    Using the standard inequality $1-x \le \exp(-x)$ we have:
    \begin{align}
        \Prob(\textbf{A}) \le  & q_{p_1}q_{p_2}q_{p_3} \cdot \prod_{p' \in
        \mathcal{Z}} (1-\lambda) \le 
    q_{p_1}q_{p_2}q_{p_3} \cdot \exp\left(-\sum_{p' \in \mathcal{Z}}
    \lambda\right)\nonumber \\
        \le & q_{p_1}q_{p_2}q_{p_3} \cdot \exp\left(- \lambda|\mathcal{Z}|\right) \le
        q_{p_1}q_{p_2}q_{p_3} \cdot \exp\left(-\lambda \eta\right) =
        q_{p_1}q_{p_2}q_{p_3} \cdot \eps^{100} \label{eq:prob_A}
    \end{align}

    For a fixed triple $p_1,p_2,p_3 \in \Ff \cup \Pp$, face $f$ that is singular
    of type-1 for $(p_1,p_2,p_3)$ and a spoke $P$ incident to $f$, let us denote
    by $\textbf{A}^1_{P,(p_1,p_2,p_3)}$ the event $\textbf{A}$ considered above.
    Let $\textbf{B}^1$ be the event that some event $\textbf{A}^1_{P,(p_1,p_2,p_3)}$ happens; that is, $\textbf{B}^1$ is the union of events $\textbf{A}^1_{P,(p_1,p_2,p_3)}$ as above.
    By~\cref{lem:few-singular-faces}, for every triple $p_1,p_2,p_3 \in \Ff \cup \Pp$ there are at most two faces $f$ that are singular of type 1 for
    $(p_1,p_2,p_3)$, and for each of them there are at most $3$ spokes $P$ to
    consider. 

    Thus, by applying the union bound to~\eqref{eq:prob_A} we get that the
    probability that some event $\textbf{A}^1_{P,(p_1,p_2,p_3)}$ happens can be bounded as follows:
    \begin{align*}
        \Prob(\textbf{B}^1) & \le 6 \eps^{100} \sum_{p_1 \in \Ff \cup \Pp}\ \sum_{p_2
        \in \Ff \cup \Pp}\ \sum_{p_3 \in \Ff \cup \Pp} q_{p_1}q_{p_2}q_{p_3}  \\
        & =6 \eps^{100} \cdot \left(\sum_{p_1 \in \Ff \cup \Pp} q_{p_1}\right) \left(\sum_{p_2 \in
        \Ff \cup \Pp} q_{p_2}\right) \left(\sum_{p_3 \in \Ff \cup \Pp} q_{p_3}\right) \\
        & \le 6 \eps^{100} \cdot \left(\sum_{p \in \Ff \cup \Pp} q_{p}\right)^3 = 6
        \eps^{100} \cdot \left(\left(\sum_{p \in \Ff} q_p\right) + \left(\sum_{p
                \in \Pp}
        q_p\right)\right)^3 = 6 \eps^{100} \left(s+|\Pp|\right)^3 
    \end{align*}
    As $|\Pp| \le 5 s$, we have
    $\Prob(\textbf{B}^1) \le 5 \cdot 48 \eps^{100} s^3 \leq 240 \eps$, which bounded by $10^{-3}$ due to $\eps<10^{-100}$. Note that if $\textbf{B}^1$ does not occur, then $\Diag_\Ss$ has no heavy spoke that would be incident to a type-$1$ singular face for some triple of objects in $\Ss$.

    This concludes the case of spokes incident to singular faces of type $1$.  Next, we analyse the remaining cases.
    Denote by $\textbf{B}^2$ the following event: there exist distinct
    $p_1,p_2,p_3 \in \Ff\cup \Pp$ and a type-$2$ singular face $f$ for $(p_1,p_2,p_3)$
    such that (i) $p_1,p_2,p_3 \in \Ss$, and (ii) at least one of the heavy
    (w.r.t. $\Ff$) spokes incident to $f$ in the Voronoi diagram of
    $\{p_1,p_2,p_3\}$ remains a spoke in $\Ss$. Similar calculation as for
    $\textbf{B}^1$ yield that $\Prob(\textbf{B}^2) \le 10^{-3}$, as the only
    property of the face of type 1 that we have used is that it was defined by a triple of objects.

    Finally, denote by $\textbf{B}^3$ the following event: there exist distinct
    $p_0,p_1,p_2,p_3 \in \Ff\cup \Pp$ and a singular type-3 face $f$ for
    $(p_0,p_1,p_2,p_3)$ such that (i) $p_0,p_1,p_2,p_3 \in \Ss$ and (ii) at
    least one of the heavy (w.r.t. $\Ff$) spokes incident to $f$ in the Voronoi
    diagram of $\{p_0,p_1,p_2,p_3\}$ remains a spoke in $\Ss$. To bound the
    probability of $\textbf{B}^3$ we apply similar calculations, however we
    sum over quadruples of objects instead of triples. This~gives:
    \begin{align*}
        \Prob(\textbf{B}^1) & \le 8 \eps^{100} \sum_{p_0 \in \Ff \cup \Pp}
        \sum_{p_1 \in \Ff \cup \Pp} \sum_{p_2
        \in \Ff \cup \Pp} \sum_{p_3 \in \Ff \cup \Pp} q_{p_0}q_{p_1}q_{p_2}q_{p_3}  \\
        &  \le  8\eps^{100} \cdot \left(\left(\sum_{p \in \Ff} q_p\right) +
            \left(\sum_{p \in \Pp}
        q_p\right)\right)^4 \le 8 \eps^{100} \left(s+|\Pp|\right)^4.
    \end{align*}
    Again, as $|\Pp| \le 5s$, this means that $\Prob(\textbf{B}^3) \le 5 \cdot 128 \eps
    \le 10^{-3}$. By~\cref{lem:types-are-enough} we conclude that the
    probability that any spoke in the diagram $\Diag_{\Ss}$ is heavy is at most $3 \cdot 10^{-3}
    \le \frac{1}{4}$.
\end{proof}

\begin{proof}[Proof of~\cref{lem:no-heavy-diamond}]
    {\bf{The remainder of this proof is in its entirety a paraphrase and an adaptation of the proof of \cite[Claim~3 in the proof of Lemma~4]{esa18}, with a majority of text taken verbatim.}}

    Consider an edge $e$ of $\Diag_\Ss$. Let $f_1,f_2$ be the endpoints of $e$ and let
    $p_1,p_2$ be the objects of $\Ss$ so that $\diamondsuit_\Ss(e)$ corresponds ot the $4$-cycle $p_1f_1p_2f_2$ in $\Rad_\Ss$. From~\cref{lem:types-are-enough} we have that both $f_1$ and $f_2$ are singular faces, for a triple (for types 1 or 2) or a quadruple (for type 3) of objects in $\Ss$. Consider first the case when $f_1$ and $f_2$ are type-1 singular faces for triples of objects from
    $\Ss$, then $f_1$ is a type-1 singular face for 
    $(p_1,p_2,p_1')$ for some object $p_1' \in \Ss$ and $f_2$ is a type-1
    singular face for $(p_1,p_2,p_2')$ for some object $p_2' \in \Ss$. Let us assume for a moment that $p_1' \neq p_2'$; all the other cases will be discussed later, and they will follow the same reasoning scheme.

    We have a quadruple of pairwise different objects $p_1,p_2,p_1',p_2'$. For
    the diamond $\diamondsuit = \diamondsuit_\Ss(e)$ to arise in the diagram $\Diag_\Ss$, both of the
    following two events must happen: (i) objects $p_1,p_2,p_1',p_2'$ are
    included in $\Ss$, and (ii) all objects intersecting
    $\diamondsuit$ except for $p_1,p_2$ must not be included in $\Ss$.
    These two events are independent. The probability of the first is $q_{p_1}
    \cdot q_{p_2} \cdot q_{p_1'} \cdot q_{p_1'}$, where $q_p = \lambda$ if $p
    \in \Ff$ and $q_p = 1$ if $p \in \Pp$. Therefore, we are left to analyse
    the probability of event (ii).

    Let $\Gg$ be the family of objects entirely contained in the interior of
    $\diamondsuit$. By the assumption that $\diamondsuit$ is heavy, we know that $|\Gg| \ge
    \eta$. Therefore, the probability that no object of $\Gg$ is included in
    $\Ss$ is upper bounded by
    \begin{align*}
        \prod_{r \in \Gg} (1-q_r) \le \exp \left( -\sum_{r \in \Gg} q_r \right)
        \le \exp(-\eta\lambda) =  \eps^{100}.
    \end{align*}

    We conclude that the probability that $\diamondsuit$ arises in $\Diag_\Ss$ is upper
    bounded by $q_{p_1} q_{p_2} q_{p_1'} q_{p_1'} \cdot \eps^{100}$.
    For every quadruple $(p_1,p_2,p_1',p_2')$ of objects in $\Ff \cup \Pp$, there
    are at most $4$ diamonds induced by this quadruple as above, as there are at
    most $2$ type-1 singular faces for $(p_1,p_2,p_1')$, and similarly, for
    $(p_1,p_2,p_1')$. Therefore, the total probability that in $\Diag_\Ss$ there is a
    heavy diamond with both endpoints being singular faces of type 1 and $p_1'
    \neq p_2'$ is bounded by:
    \begin{align*}
        4 \eps^{100} \cdot \sum_{p_1,p_2,p_1',p_2' \in \Ff \cup \Pp} q_{p_1}
        q_{p_2} q_{p_1'} q_{p_2'} & \le 4\eps^{100} \cdot \left(\sum_{p \in
        \Ff \cup \Pp} q_p \right)^4\\
        & \le 4 \eps^{100} \cdot \left(|\Pp| + s\right)^4 \le 4 \cdot \eps \le 10^{-3}.
    \end{align*}

    The reasoning for the case when $p_1' = p_2'$ is analogous; the bound is
    even better as we sum over triples instead of quadruples. The proofs for the cases when $f_1$ or $f_2$ is a face of type $2$ or $3$ are similar; the type $3$ we would consider a quadruple of objects for the respective branching point instead of a triple, so for instance the case when both $f_1$ and $f_2$ are of type $3$ would involve summation over $8$-tuples of objects. Due to the appearance of the 
    $\eps^{100}$ factor, the probability in every case can be bounded by $10^{-3}$. Moreover, the number of different cases is bounded by $250$, so summing all the
    probabilities we can conclude that the probability that $\Diag_\Ss$ contains a
    heavy diamond is at most $\frac{1}{4}$.
\end{proof}

\section{Finding balanced cycles in the radial graph} \label{sec:proof-separator}

This section is dedicated to the proof of the following separator statement for Voronoi diagrams.

\separator*

Our proof of \cref{thm:separator} is a rather straightforward modification of the proof of~\cite[Theorem~4.15 and Lemma~4.18]{esa15}.
The main caveat is that in \cite[Section~4]{esa15}, Pilipczuk and Marx consider only the setting of unweighted graphs, and need balancedness with respect to either the number of edges or of faces. Here, we consider the vertex-weighted setting, hence the reasoning has to be adjusted. 
The remainder of this section is dedicated to the proof of \cref{thm:separator}.

\paragraph*{Nooses.} Before we start, we need to recall the notion of a \emph{noose} and its main properties; {\bf{this recollection is a close-to-verbatim citation of the discussion in~\cite[Section~4.6]{esa15}}}. 
Let $H$ be a connected graph embedded in the sphere $\mathbb{S}^2$. A {\em{noose}} in $H$
is a closed, directed curve $\sigma$ on $\mathbb{S}^2$ without self-crossings
that only intersects $H$ at its vertices and visits every face of $H$ at most
once. When we remove $\sigma$ from the sphere $\mathbb{S}^2$, it divides the
sphere into two open disks: one for which $\sigma$ represents the clockwise
traversal of the perimeter, and the other for which it represents the
counterclockwise traversal (assuming a fixed orientation of $\mathbb{S}^2$).
Following the notation of \cite{esa18}, we shall call the first disk
$\enc(\sigma)$ (for \emph{enclosed}), and the second disk we call $\exc(\sigma)$
(for \emph{excluded}). The {\em{length}} of a noose is the number of vertices of $H$ it~crosses.

We will use the following variant of a balanced noose lemma.

\begin{lemma}\label{lem:noose-separator}
    Let $H$ be a connected $3$-regular multigraph embedded in the sphere $\mathbb{S}^2$. Let $\omega$ be an assignment of nonnegative weights to
    the vertices of $H$ such that the total sum of the weights is $1$. Then there exists a noose
    $\sigma$ of length at most $\sqrt{\frac{9}{2}|V(H)|}$ such that the total weight of vertices lying inside
    $\enc(\sigma)$ is at most $2/3$, and the same holds for the vertices lying inside $\exc(\sigma)$.
\end{lemma}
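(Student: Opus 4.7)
My plan is to follow the scheme of \cite[Sections~4.7--4.8]{esa15} but carry the weights throughout. Marx and Pilipczuk prove this lemma in the unweighted setting, where balancedness is measured by counting faces (or edges) of $H$; in their proof the specific weight function is only ever used through its additivity and the fact that its total mass equals some fixed quantity. Replacing that counting measure with an arbitrary nonnegative vertex weight $\omega$ summing to $1$ should therefore yield the statement almost verbatim, provided one tracks numerical constants.

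Concretely, I would pass to the radial graph $R_H$ of $H$: the $2$-connected planar bipartite multigraph on vertex set $V(H) \cup F(H)$ in which a primal vertex $v$ is joined to a primal face $f$ whenever $v$ is incident with $f$ in the embedding. Nooses of length $\ell$ in $H$ are in bijection with simple cycles of length $2\ell$ in $R_H$ that alternate between primal vertices and primal faces, with the two regions $\enc(\sigma)$ and $\exc(\sigma)$ matching the two sides of the cycle. Since $H$ is $3$-regular and connected, Euler's formula gives $|E(H)| = 3n/2$ and $|F(H)| = n/2 + 2$, so $|V(R_H)| = 3n/2 + 2$, every primal vertex has degree $3$ in $R_H$, and every face of $R_H$ has length exactly $4$ (one face per primal edge of $H$).

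I would then transport $\omega$ to a weight function $\omega'$ on $V(R_H)$ by setting $\omega'(v) = \omega(v)$ for $v \in V(H)$ and $\omega'(f) = 0$ for $f \in F(H)$; the total mass is still $1$. At this point I would invoke a weighted Miller-style cycle separator on $R_H$: for $2$-connected planar graphs of bounded face length with a nonnegative vertex weight summing to $1$, there exists a simple cycle separator of length $O(\sqrt{|V(R_H)|})$ that is $2/3$-balanced. To obtain the precise constant $\sqrt{9n/2}$, I would redo the BFS-based argument of \cite[Sec.~4.8]{esa15} essentially verbatim: build a BFS forest in an auxiliary graph, use an averaging step to locate a level of small cardinality, and use a second averaging step (this time against $\omega$ instead of against a face count) to locate a balanced level; then close a noose by joining two BFS paths. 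Halving the length of the resulting cycle in $R_H$ gives a noose of length at most $\sqrt{9n/2}$ in $H$.

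The main technical point, and the place where the adaptation must be checked carefully, is that every averaging and counting step used in \cite[Sec.~4.7--4.8]{esa15} depends only on the additivity of the balance measure. Additivity holds for arbitrary nonnegative $\omega$, so the ``balanced level exists'' arguments carry over unchanged; the only thing that could move is the constant hidden in the length bound. That constant is pinned down by the $3$-regularity of $H$ (which fixes the ratios $|V(H)|:|E(H)|:|F(H)|$ up to an additive constant), exactly as in \cite{esa15}, and it yields the advertised factor $\sqrt{9/2}$. I expect this constant-chasing to be the only genuinely delicate part of the plan; the structural argument is otherwise a direct transcription of the proof from~\cite{esa15}.
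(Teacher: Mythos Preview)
Your route diverges from the paper's, and your attribution to \cite{esa15} is off. What \cite[Theorem~4.15]{esa15} actually does---and what the paper adapts---is not a BFS/Miller argument on the radial graph but a \emph{sphere-cut decomposition} argument: decompose $H$ into its bridgeless (2-edge-connected) components, locate a central component $B_{x_0}$ such that every component of $H-B_{x_0}$ carries weight at most $1/2$, push the weight of each dangling piece onto its attachment vertex in $B_{x_0}$, extract a short balanced noose in $H[B_{x_0}]$ from a sphere-cut branch decomposition of small width, and finally lift that noose back to $H$ by greedily assigning each dangling piece to the currently lighter side. The passage from unit weights to an arbitrary $\omega$ is exactly as you say (only additivity is used), but it lives inside this bridgeless/branch-decomposition framework; the constant $\sqrt{9/2}$ comes from the branchwidth bound for $3$-regular sphere-embedded multigraphs, not from BFS layering.

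Your radial-graph alternative also has a concrete gap. The claim that $R_H$ is $2$-connected with all faces of length $4$ relies on $H$ being simple, which is not assumed: $H$ is a $3$-regular \emph{multigraph}. If $H$ has a face bounded by a single loop (e.g.\ two vertices, each carrying a self-loop, joined by a bridge), that face has exactly one corner and hence degree~$1$ in $R_H$, so Miller's simple-cycle separator does not apply as stated. Bridges in $H$ cause analogous trouble. This is precisely what the bridgeless reduction followed by sphere-cut decompositions is designed to absorb. If you want to push the radial-graph route through you would have to treat these degeneracies by hand and rederive the constant from scratch; there is no BFS layering in \cite[Sections~4.7--4.8]{esa15} to transcribe.
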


The proof of \cref{lem:noose-separator} is essentially a repetition of the proof of
\cite[Theorem~4.15]{esa15}, so we only highlight the necessary minor modifications. There are two differences between \cref{lem:noose-separator} and \cite[Theorem~4.15]{esa15}:
\begin{enumerate}[(i),nosep]
 \item in \cite[Theorem~4.15]{esa15} the balancedness is measured in terms of edges of $H$ instead of vertices; and
 \item \cite[Theorem~4.15]{esa15} considers the unweighted setting where every edge of $H$ has a unit weight.
\end{enumerate}
These two differences can be mitigated as follows. 

In the proof of \cite[Theorem~4.15]{esa15} one considers the decomposition of
$H$ into bridgeless components and identifies a {\em{central}} bridgeless
component $B_{x_0}$: a bridgeless component such that the total weight of every
component of $G-B_{x_0}$ is at most $1/2$. There is an easy corner case when
$B_{x_0}$ consists of just one vertex, which can be resolved in exactly the same
way in our setting when the diagram is vertex-weighted. So let us assume that
$B_{x_0}$ consists of more than one vertex. 

Observe that if for any component $C$ of $G-B_{x_0}$ the total weight of vertices in $C$ is between $1/3$ and $1/2$, then we can easily construct a noose of length $1$ with the desired balancedness property as follows: the noose travels through the unique face lying on both sides of the bridge connecting $C$ and $B_{x_0}$, and crosses the endpoint of that bridge that belongs to $B_{x_0}$. Hence, we may assume that for each component $C$ of $G-B_{x_0}$, the total weight of the vertices of $C$ is at most $1/3$.

Next, in \cite[Theorem~4.15]{esa15} one considers the induced subgraph $G[B_{x_0}]$ (we consider $B_{x_0}$ to be a subset of vertices of $G$) and constructs a weight function on the edges of this subgraph as follows. For every connected component $C$ of $G-B_{x_0}$, say connected to $B_{x_0}$ through a bridge $fg$ where $g\in B_{x_0}$, we assign all the edges of $C$ (including the bridge $fg$) to one of the edges of $G[B_{x_0}]$ incident to $g$. Then the weight of an edge $e\in E(G[B_{x_0}])$ is equal to $1$ plus the total number of edges assigned to $e$. In our vertex-weighted setting, we can just assign the total weight of vertices in $C$ to the vertex $g$, so that its new weight becomes the original weight plus the total weight of vertices in $C$.

Now that $G[B_{x_0}]$ is bridgeless, one can find a balanced noose (which is a noose in the sphere drawing of $G[B_{x_0}]$) using known results about the existence of {\em{sphere-cut decompositions}} in bridgeless planar multigraphs. In~\cite[Theorem~4.15]{esa15} the noose is balanced with respect to the edge weights, but exactly the same proof strategy also yields a noose $\gamma$ such that the total weight of the vertices of $B_{x_0}$ strictly enclosed by $\gamma$ is at most $2/3$, and the same holds also for the vertices of $B_{x_0}$ strictly excluded by $\gamma$.

The last step in the proof is to lift the noose $\gamma$ --- which is a noose in the sphere drawing of $G[B_{x_0}]$ --- to a noose $\gamma'$ in the sphere drawing of $G$. This boils down to deciding, for every component $C$ of $G-B_{x_0}$ such that $\gamma$ crosses the vertex of $B_{x_0}$ adjacent to $C$, whether $C$ should be placed so that it is entirely enclosed or entirely excluded by $\gamma'$. Since the total weight of every such $C$ is at most $1/3$, it is not difficult to see that by investigating the components $C$ one by one, and always assigning them to the side of $\gamma'$ that is currently lighter, we do not break the $2/3$-balancedness property. Hence, the noose $\gamma'$ constructed at the end will be $2/3$-balanced, as required.

\medskip

Equipped with \cref{lem:noose-separator}, we continue the proof of \cref{thm:separator}. Let $\sigma$ be the noose given by \cref{lem:noose-separator} for the diagram $H\coloneqq \Diag_\Ss$ (which, recall, is a connected sphere-embedded $3$-regular multigraph).
Analogously to \cite{esa15}, we define a cycle $C=C(\sigma)$ in $\Rad_\Ss$ as follows. Suppose $\sigma$ visits faces and branching
points $f_1^\ast, f_1, f_2^\ast, f_2, \ldots, f_r^\ast, f_r$ in this order,
where $f_i^\ast$ are different faces of $H$, and $f_i$ are different branching
points of $H$. Recall that faces of $H$ correspond to objects of $\Ff$. Therefore,
let $p_1, \ldots, p_r \in \Ff$ be the objects that correspond to $f_1^\ast,
\ldots, f_r^\ast$, respectively.

Next, consider any $i \in \{1, \ldots, r\}$. Upon leaving face $f_i^\ast$, noose $\sigma$ enters branching point $f_i$ through one of three regions in a close neighborhood of $f_i$, separated by the edges of $H$ incident to $f_i$. Each of these three regions corresponds in a natural way to one of the vertices of $f_i$ (which, recall, is a triangular face of $G$); see also the discussion in
\cite[Section 4.8]{esa15} for a formal treatment. Therefore, we define $u_i$ to be the vertex of $f_i$ that corresponds to the region through which $\sigma$ enters $f_i$ from $f_i^\ast$. Analogously, we define $v_i$ to be the vertex of $f_i$ that corresponds to the region through which $\sigma$ enters face $f_{i+1}^\ast$ from branching point $f_i$ (with indices behaving cyclically). 


We now may define the cycle $C=C(\sigma)$ naturally: the consecutive edges of $\Diag_\Ss$ traversed by $C$ are the edge $p_if_i$ with label $u_1$ and the edge $f_ip_{i+1}$ with label $v_1$, for $i=1,\ldots,r$ (indices behave cyclically). Since $\sigma$ visits every face of $\Diag_\Ss$ and every branching point of $\Diag_\Ss$ at most once, it is clear that $C$ is a cycle in $\Diag_\Ss$. 
Moreover, the balancedness assertion provided by \cref{lem:noose-separator} and the construction immediately imply the balancedness property asserted in the statement of \cref{thm:separator}. Finally, the length of $C$ is twice larger than the length of $\sigma$, hence it is bounded by $\sqrt{18|V(H)|}$. So the proof of \cref{thm:separator} is complete.

\end{appendix}

\end{document}